\errorcontextlines10
\documentclass[11pt,reqno]{amsart}

\usepackage{amsmath,amssymb,amsthm,mathrsfs,eucal,textcomp}                                                        
\usepackage[utf8]{inputenc}
\usepackage[T1]{fontenc}
\usepackage{lmodern}
\usepackage{amsmath}
\usepackage{amsfonts}
\usepackage{amssymb}
\usepackage{caption}
\usepackage{mathrsfs}
\usepackage{amsthm}
\usepackage{tikz}
\usepackage{bm}
\usepackage{color}
\usepackage{rotating}
\usepackage{array}

\usetikzlibrary{calc,arrows}

\theoremstyle{plain}
\newtheorem{theorem}{Theorem}[section]
\newtheorem{lemma}[theorem]{Lemma}

\newtheorem{statement}{Statement}
\theoremstyle{definition}
\newtheorem{definition}[theorem]{Definition}
\newtheorem{remark}[theorem]{Remark}

\def\R{\mathbb{R}}
\def\Z{\mathbb{Z}}
\def\e{\varepsilon}
\def\J{\mathcal{J}}
\def\Jalt{\mathscr J}

\def\L{\mathcal{L}}

\def\E{\mathbb{E}}
\def\udf{\underline{f}}
\def\var{\text{Var}}
\def\pmuj{\text{Jmp}}
\def\H{\mathrm H}
\def\st{\mathsf t}
\def\sx{\mathsf x}
\def\ss{\mathsf s}

\usepackage[scale=0.775]{geometry}

\title[Quadratic and RI limits for LDF]{Quadratic and rate-independent limits for a Large-deviations functional}

\author{G. A. Bonaschi}
\address{Giovanni A. Bonaschi \\ Institute for Complex Molecular Systems and Department of Mathematics and Computer Science, Technische Universiteit Eindhoven, P.O. Box 513, 5600 MB, Eindhoven, The Netherlands \&  Dipartimento di Matematica, Universit\`{a} di Pavia, 27100 Pavia, Italy} 
\email{g.a.bonaschi@tue.nl}

\author{M. A. Peletier}
\address{Mark A. Peletier\\ Institute for Complex Molecular Systems and Department of Mathematics and Computer Science, Technische Universiteit Eindhoven, Den Dolech 2, P.O. Box 513, 5600 MB Eindhoven, The Netherlands}
\email{m.a.peletier@tue.nl}


\date{ \today}

\begin{document}

\begin{abstract}
We construct a stochastic model showing the relationship between noise, gradient flows and rate-independent systems. The model consists of a one-dimensional birth-death process on a lattice, with rates derived from Kramers' law as an approximation of a Brownian motion on a wiggly energy landscape. Taking various limits we show how to obtain a whole family of generalized gradient flows, ranging from quadratic to rate-independent ones, connected via `$L \log L$' gradient flows. This is achieved via Mosco-convergence of the renormalized large-deviations rate functional of the stochastic process.
\end{abstract}

\keywords{Large deviations, Gamma-convergence, gradient flows, Markov chains, rate-independent systems}
\subjclass[2010]{49S05, 49J45, 47H20, 60F10, 60G50, 60J05, 60J75, 60J60, 74C15}

\maketitle


\section{Introduction}

\subsection{Variational evolution}
Two of the most studied types of variational evolution,  `Gradient-flow evolution' and `Rate-independent evolution', differ in quite a few aspects. Although both are driven by the variation in space and time of an energy, gradient flows are in fact driven by energy gradients, while in practice rate-independent systems are driven by changes in the external loading (represented by the time variation of the energy). As a result, gradient-flow systems have an intrinsic time scale, while rate-independent systems (as the name signals) do not, and the mathematical definitions of solutions of the two are rather different~\cite{Mielke05a,AGS}.

Despite this they share a common structure. Both can be written, at least formally, as
\begin{equation}
\label{eq:introductionGF}
0\in \partial\psi(\dot x(t)) + \mathrm{D}_x E(x(t),t).
\end{equation}
Here $E$ is the energy that drives the system, and the convex function $\psi$ is a \emph{dissipation potential}, with subdifferental $\partial \psi$. For gradient flows, $\psi$ typically is quadratic, and $\partial\psi$ single-valued and linear; for rate-independent systems, $\psi$ is $1$-homogeneous, and $\partial\psi$ is a degenerate monotone graph. 

Rate-independent systems have some unusual properties. Solutions are expected to be discontinuous, and therefore the concept of smooth solutions is meaningless. Currently two rigorous definitions of weak solutions are used, which we refer to as `energetic solutions'~\cite{MainikMielke05} and `$BV$ solutions'~\cite{MielkeRossiSavare12a}. Heuristically, the first corresponds to the principle `jump whenever it lowers the energy', while the second can be characterized as `don't jump until you become unstable'. For time-dependent \emph{convex} energies the two definitions coincide, but in the non-convex case they need not be.

Various rigorous justifications of rate-independent evolutions have been constructed, which underpin the rate-independent nature by obtaining it through upscaling from a  `microscopic' underlying system (e.g.~\cite{AbeyaratneChuJames96,Cagnetti08,Dal-MasoDeSimoneMora06,Dal-MasoDeSimoneMoraMorini08,Fiaschi09a,Mielke12,PuglisiTruskinovsky05,Sullivan09TH}).
The general approach in these results is to choose a microscopic model with a component of gradient-flow type (quadratic dissipation, deemed more `natural'),  and then take a limit which  induces the vanishing of the quadratic behaviour and the appearance of the rate-independent behaviour.

While these results give a convincing explanation of the rate-independent nature, they all are based on \emph{deterministic} microscopic models. Other arguments suggest that  the rate-independence may arise through the interplay between thermal noise and a rough energy landscape. A well-studied example of this is the non-trivial temperature-dependence of the yield stress in metals, which shows that the process is thermally driven (e.g.~\cite{Basinski59}), together with many classical non-rigorous derivations of rate-independent behaviour~\cite{Becker25,Orowan40,KrauszEyring75}.

Recently, stochasticity has also been shown to play a role in understanding the origin of various gradient-flow systems, such as those with Wasserstein-type metrics~\cite{AdamsDirrPeletierZimmer11,DirrLaschosZimmer12,AdamsDirrPeletierZimmer13,Renger13TH,MielkePeletierRenger13TR}. In this paper we ask the question whether these different roles of noise can be related:
\begin{quote}
\emph{What is the relationship between noise, gradient flows, and rate-independent systems?}
\end{quote}

We will provide a partial answer to this question by studying a simple stochastic model below. By taking various limits in this model,  we obtain a full continuum of behaviours, among which rate-independence and quadratic gradient flow can be considered  extreme cases. In this sense both rate-independent and quadratic dissipation arise naturally from the same stochastic model in different limits.

\subsection{The model}
The model of this paper is a continuous-time Markov jump process $t\mapsto X^n_t$ on a one-dimensional lattice,  sketched in Figure~\ref{fig:birth-death}. Denoting by $1/n$ the lattice spacing, we will be interested in the continuum limit as $n\to\infty$. 

The evolution of the process can be described as follows. Assume that a smooth function $(x,t) \mapsto E(x,t)$ is given and fix the origin as initial point. If the process is at the position $x$ at time $t$, then it jumps in continuous time to its neighbours $(x-1/n)$ and $(x+1/n)$ with rate $nr^-$ and $nr^+$, where $r^{\pm}(x)= \alpha \exp( \mp \beta \nabla E(x,t) ) $ (throughout we use $\nabla E(x,t)$ for the derivative with respect to $x$).
\begin{figure}[h]
\centering
\begin{tikzpicture}[scale=1.4]
\draw [semithick, color=black] (-4,0) -- (3,0);
\draw [semithick, color=black] (-3,.3) -- (-2,.3);
\draw [semithick, color=black] (-3,.4) -- (-3,.2);
\draw [semithick, color=black] (-2,.4) -- (-2,.2);
\draw (-1,-.1) node[anchor=north] {$x - \frac1n$};
\draw (0,-.1) node[anchor=north] {$x$};
\draw (1,-.1) node[anchor=north] {$x+\frac1n$};
\draw (-2.5,.3) node[anchor=south] {$\frac1n$};
\foreach \i in {-3,...,2}{ 
\draw [semithick, color=black] (\i,-.1) -- (\i,.1) ;}
\draw [->, semithick, color=black] (0,.2) to [out=120,in=60] (-1,.2);
\draw [->, semithick, color=black] (0,.2) to [out=60,in=120] (1,.2);
\draw (.5,1) node[anchor=north] {$nr^+$};
\draw (-.5,1) node[anchor=north] {$nr^-$};
\draw (5,.5) node[anchor=center] {$r^{\pm}(x)= \alpha e^{ \mp \beta \nabla E(x,t) } $};
\draw (5,-.5) node[anchor=center] {$x \in \frac1n \Z$};
\end{tikzpicture}
\caption{The one-dimensional lattice with spacing $1/n$. The jump rates~$r^+$ and~$r^-$ depend on two parameters $\alpha$ and~$\beta$ and on the derivative of the function~$E$.}
\label{fig:birth-death}
\end{figure}
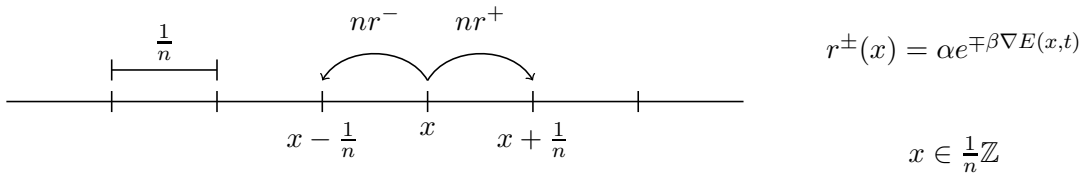

The choice of this stochastic process is inspired by the noisy evolution of a particle in a wiggly energy landscape. An example could be that of a Brownian particle in an energy landscape of the form $\mathcal E_n(x,t) = E(x,t) + n^{-1} \mathrm{e}(nx)$, where $E$ is the smooth energy introduced above, and $\mathrm e$ is a fixed periodic function.  If the noise is small with respect to the variation of $\mathrm{e}$ $(\max \mathrm{e} - \min \mathrm{e})$, then this Brownian particle will spend most of its time near the wells of $\mathcal E_n$, which are close to the wells of $\mathrm{e}(n\,\cdot\,)$. Kramers' formula~\cite{Kramers40,Berglund11TR} provides an estimate of the rate at which the particle jumps from one well to the next; in Section~\ref{subsec:modelling} below we show how some approximations lead to the jump rates~$r^\pm$ above.  

The jump process of Figure~\ref{fig:birth-death} has a bias in the direction $-\nabla E$ of magnitude 
\begin{equation}
\label{eq:diff-of-rates}
r^+ - r^- = -2\alpha  \sinh (\beta \nabla E),
\end{equation}
and we will see this expression return as a drift term in the limit problem. The parameter $\alpha$ characterizes the rate of jumps, and thus fixes the global time scale of the process; the parameter $\beta$ should be thought as the inverse of temperature, and characterizes the size of the noise.

\subsection{Heuristics}
We now give a heuristic view of the dependence of this stochastic process on the parameters $n$, $\alpha$, and $\beta$, and in doing so we look ahead at the rigorous results that we prove below. 

First, as $n\to\infty$, the process $X^n$ becomes deterministic, as might be expected, and its limit $x$ satisfies the differential equation suggested by~\eqref{eq:diff-of-rates}:
\begin{equation}
\label{eq:xdotfirst}
\dot x(t) = -2\alpha  \sinh (\beta \nabla E(x(t),t)).
\end{equation}
Equation \eqref{eq:xdotfirst} is of the form \eqref{eq:introductionGF} with $(\partial \psi)^{-1}=2\alpha \sinh (\beta \cdot)$. From the viewpoint of the gradient-flow-versus-rate-independence discussion above, the salient feature of the function  $\xi\mapsto 2\alpha\sinh(\beta \xi)$ is that it embodies both quadratic and rate-independent behaviour in one and the same function, in the form of  limiting behaviours according to the values of the parameters $\alpha,\beta$. This is illustrated by Figure~\ref{fig:visc-norm-RI}, as follows. On one hand, if we construct a limit by zooming in to the origin, corresponding to $\beta \to 0$, $\alpha \sim w/\beta$ (the left-hand figure), then we find a limit that is linear; on the other hand, if we zoom out, and rescale with $\beta \to \infty$ and $\alpha \sim e^{-\beta A}$, then the exponential growth causes the limit to be the monotone graph in the right-hand side. 

These two limiting cases correspond to a gradient-flow and a rate-independent behaviour respectively. In formulas, as $\alpha\to\infty$ and $\beta\to0$ with $\alpha\beta\to\omega$ for fixed $\omega>0$, then equation~\eqref{eq:xdotfirst} converges to 
\begin{equation}
\label{eq:Q}
\dot x(t) = -2\omega \nabla E(x(t),t),
\end{equation}
which is a gradient flow of $E$. The limit $\alpha\to\infty$ corresponds to large rate of jumps in the underlying stochastic process, while $\beta\to0$ corresponds to a weak influence of the energy gradient. 

In the other case, as $\alpha\to0 $ and $\beta\to\infty$ with $\alpha\sim \exp(-\beta A)$, the rate of jumps is low, but the influence of the energy becomes large. Formally, we find the limiting equation
\begin{equation}
\label{eq:RI}
\dot x(t) \in m_A(-\nabla E(x(t),t)), 
\qquad\text{where}\qquad
m_A(\xi) = \begin{cases}
\emptyset  & \text{if } \xi < -A,\\
[-\infty,0] & \text{if } \xi = -A,\\
\{0\} & \text{if } -A < \xi < A,\\
[0,\infty] & \text{if } \xi = A,\\
\emptyset & \text{if } \xi>A.
\end{cases} 
\end{equation}
Again formally, in this limit the system can only move while $\nabla E=\pm A$; whenever the force $|\nabla E|$ is less than $A$, the system is frozen, while values of $|\nabla E|$ larger than $A$ should never appear. In Section~\ref{sec:rilimit} we obtain a rigorous version of this evolution as the limit system. 

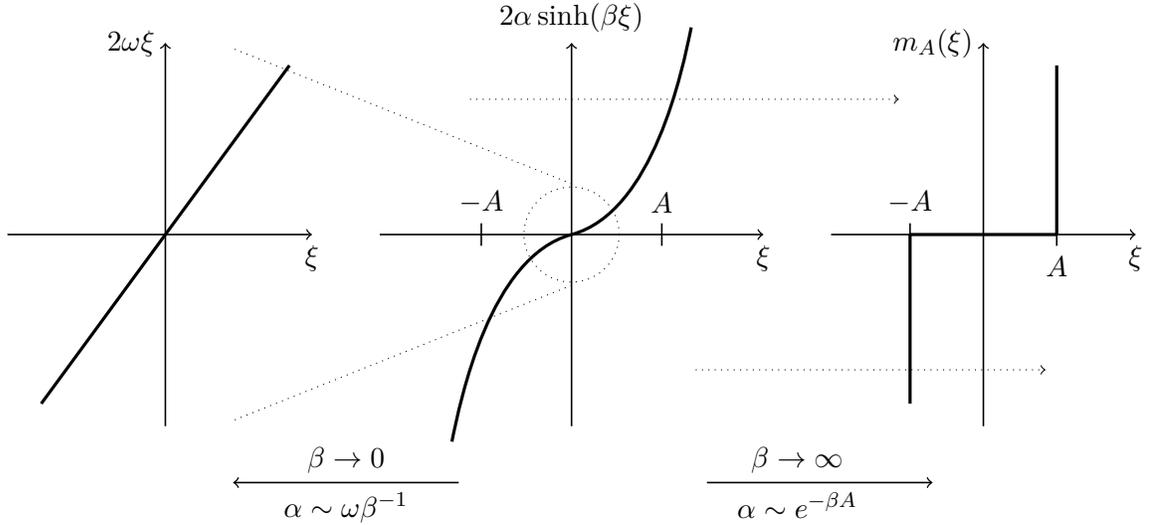
\begin{figure}[t]
\centering
\begin{tikzpicture}[
        scale=1.5,
        axis/.style={semithick, ->, >=stealth'},
        important line/.style={thick},
        dashed line/.style={dashed, thick},
        every node/.style={color=black,}
     ]
    \coordinate (beg_1) at (0,-.5);
    \coordinate (beg_2) at ($(beg_1)+(1,0)$);
    \coordinate (dev_1) at ($(beg_2)+(0,-.75)$);
    \coordinate (xint) at (3,0);
    \coordinate (end) at (5,1.25);

    \draw[->, semithick] (.3,0)  -- (3.7,0) node(xline)[below] {$\xi$};
    \draw[->, semithick] (2,-1.7) -- (2,1.7) node(yline)[above] {$2\alpha\sinh(\beta\xi)$};
    \draw[very thick, rotate around={-165:(2,0)}]
        (2,0) parabola  (3.5,1.5);
    \draw[very thick, rotate around={15:(2,0)}]
        (2,0) parabola  (3.5,1.5);
    \draw[semithick] (2.8,-.1) -- (2.8,.1) node[above] {$A$};
    \draw[semithick] (1.2,-.1) -- (1.2,.1) node[above] {$-A$};

    \draw[dotted] (2,0) circle (12pt);
    \draw[dotted] (2,.45) -- (-1,1.65);
    \draw[dotted] (2,-.45) -- (-1,-1.65);

    \draw[<-, semithick] (-1,-2.2) -- (1,-2.2);
    \draw (0,-2.2) node[anchor=south] {$\beta \to 0$};
    \draw (0,-2.2) node[anchor=north] {$\alpha \sim \omega\beta^{-1}$};

    \draw[->, dotted] (1.1,1.2) -- (4.9,1.2);    
    \draw[->, dotted] (3.1,-1.2) -- (6.2,-1.2);    

    \draw[->, semithick] (3.2,-2.2) -- (5.2,-2.2);    
    \draw (4,-2.2) node[anchor=south] {$\beta \to \infty$};
    \draw (4,-2.2) node[anchor=north] {$\alpha \sim e^{-\beta A}$};

   \draw[->, semithick] (-3,0) -- (-.3,0) node(xline)[below] {$\xi$};
   \draw[->, semithick] (-1.6,-1.7) -- (-1.6,1.7)  node(yline)[left] {$2\omega\xi $}; 
   \draw[very thick] (-2.7,-1.5) -- (-.5,1.5);

   \draw[->, semithick] (4.3,0) -- (7,0) node(xline)[below] {$\xi$};   
   \draw[->, semithick] (5.65,-1.7) -- (5.65,1.7)  node(yline)[left] {$m_A(\xi) $}; 
   \draw[very thick] (5,0) -- (6.3,0); 
   \draw[semithick] (5,-.1) -- (5,.1) node[anchor=south] {$-A$};
   \draw[semithick] (6.3,.1) -- (6.3,-.1) node[anchor=north] {$A$};
   \draw[very thick] (5,0) -- (5,-1.5);
   \draw[very thick] (6.3,0) -- (6.3,1.5);
    
\end{tikzpicture}
\caption{The middle graph shows the function $\xi\mapsto 2\alpha \sinh(\beta\xi)$ for moderate values of $\alpha$ and $\beta$. The left graph shows the limit for $\beta \to 0$, similar to zooming in to the region close to the origin; this limit is linear. The figure on the right shows the limiting behaviour when $\beta \to \infty$, for a specific scaling of $\alpha$. This second limit does not exist as a function, but only as a graph (a subset of the plane) defined in \eqref{eq:RI}.}
\label{fig:visc-norm-RI}
\end{figure}

\subsection{Large deviations, gradient flows, and variational formulations}
\label{subsec:ldpgf-intro}

Before we describe the results of this paper, we comment on the methods that we use. We previously introduced the concept of gradient flows and now we introduce the one of \emph{large deviations} (both are defined precisely in Section~\ref{sec:GF&LD}).  

In the context of stochastic systems, the theory of large deviations provides a characterization of the probability of rare events, as some parameter---in our case $n$---tends to infinity. In the case of stochastic processes, this leads to a \emph{large-deviations rate function} $\Jalt$ that is defined on a suitable space of curves. It is now known that many gradient flows and large-deviations principles are strongly connected~\cite{AdamsDirrPeletierZimmer11,DuongLaschosRenger13,AdamsDirrPeletierZimmer13,MielkePeletierRenger13TR}. In abstract terms, the rate function $\Jalt$ of the large-deviations principle simultaneously figures as the defining quantity of the gradient flow, in the sense that 
\begin{align*}
&\Jalt\geq 0;\\
&\text{$t\mapsto z(t)$ is a solution of the gradient flow }\Longleftrightarrow \Jalt(z(\cdot)) = 0.
\end{align*}
The components of the gradient flow (the energy $E$ and dissipation potential $\psi$) can be recognized in~$\Jalt$.
In~\cite{MielkePeletierRenger13TR} it was shown how jump processes may generate large-deviations rate functions with non-quadratic dissipation, leading to the concept of \emph{generalized} gradient flows (see also~\cite{MielkeRossiSavare09,MRS12,DuongPeletierZimmer13}). 

The central tool in this paper is this functional $\Jalt$ that characterizes both the large deviations of the stochastic process and the generalized gradient-flow structure of the limit. Our convergence proofs will be stated and proved using only this functional, giving a high level of coherence to the results. 

\subsection{Results}

\label{sec:mainresults}

\begin{figure}[t]
\begin{tabular}{c}
\begin{tikzpicture}[scale=2]
\draw (0,0) node[anchor=center] {$X^n$};
\draw [->, semithick, color=black] (0.3,0) -- (2.1,0);
\draw (1.3,0) node[anchor=south] {$\beta = (hn)^{-1}$};
\draw (1.3,0) node[anchor=north] {$n \to \infty$};
\draw (3,.1) node[anchor=center] {$dY^{h}_t= - 2\omega \, \nabla E \, dt$};
\draw (3,-.2) node[anchor=center] {$+ \, \sqrt{2 \omega h} \, dW$};
\draw [->, dashed, color=black] (0,-.3) -- (0,-2.7);
\draw (0,-1.5) node[anchor=south,rotate=-90] {$n \to \infty$};
\draw (0,-1.5) node[anchor=north,rotate=-90] {$\alpha, \beta \;$ fixed};
\draw (0,-3) node[anchor=center] {$\Jalt_{\alpha,\beta}$};
\draw [->, semithick, color=black] (.3,-.3) -- (2.7,-2.7);
\draw (1.5,-1.5) node[anchor=north,rotate=-45] {$\beta > n^{-1}$};
\draw (1.5,-1.5) node[anchor=south,rotate=-45] {$n\to \infty$};
\draw [->, semithick, color=black] (.3,-3) -- (2.7,-3);
\draw (1.5,-3) node[anchor=south] {$\beta \to 0$};
\draw (1.5,-3) node[anchor=north] {$\alpha \to \infty$};
\draw (3,-3) node[anchor=center] {$\J_Q$};
\draw [->, dashed, color=black] (3,-.5) -- (3,-2.7);
\draw (3,-1.5) node[anchor=south,rotate=-90] {$h\to 0$};
\draw [->, dotted, color=black] (-.3,-.3) -- (-2.7,-2.7);
\draw (-1.5,-1.5) node[anchor=south,rotate=45] {?};
\draw [->, semithick, color=black] (-.3,-3) -- (-2.7,-3);
\draw (-1.5,-3) node[anchor=south] {$ \beta \to \infty$};
\draw (-1.5,-3) node[anchor=north] {$\alpha \to 0$};
\draw (-3,-3) node[anchor=center] {$\J_{RI}$};
\end{tikzpicture}
\end{tabular}
\caption{The figure is a schematic representation of this paper. In the top center there is the generator of the Markov process. The arrows starting from $X^n$ represent the limiting behaviour for $n \to \infty$ in different regimes. The center arrow represents the limit with $\alpha$ and $\beta$ fixed and ends at the rate functional $\Jalt_{\alpha,\beta}$ (Statement~\ref{statement:LD}). Statement~\ref{statement:Commutation} is represented by the right side of the Figure. The limits with $\beta=\beta(n)$  show that the limiting behaviour may be either a Brownian motion with drift (B2a) or a gradient flow characterized by the rate functional $\J_Q$ (B1). In the bottom part there are the two Mosco-limits, representing Statement~\ref{statement:Connection} where (A1) is the right arrow and (A2) the left one. Dashed lines are known results, the thick lines are our contribution, and the dotted line is an open problem.}
\label{fig:ideaofpaper}
\end{figure}
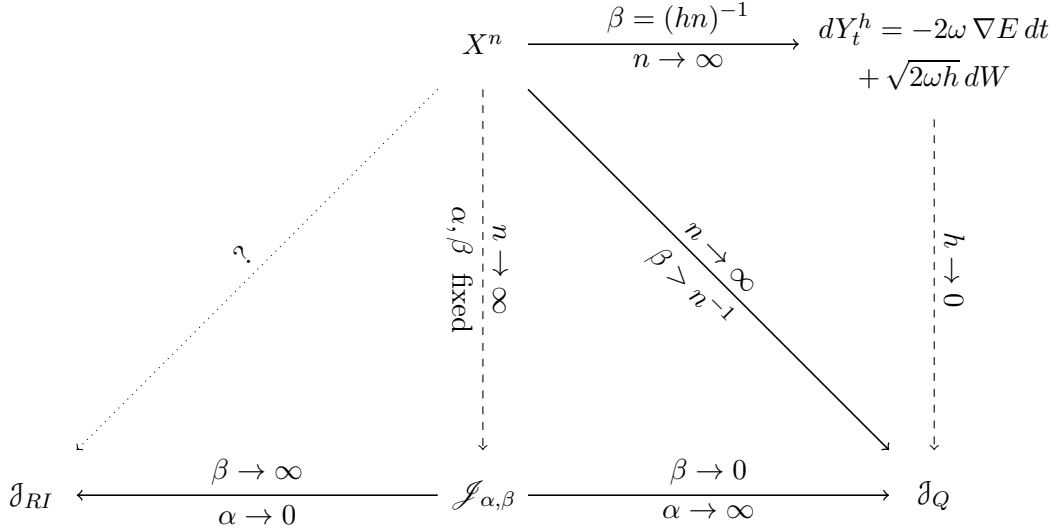

In this section we give a non-rigorous description of the results of this paper, with pointers towards the rigorous theorems later in the paper. 

Fix an energy $E \in C^1(\R\times[0,T])$.
We start with the large deviations result for the jump process $X^n$ due to Wentzell~\cite{Wen77,Wen90}.

\begin{statement}[Large deviations]
\label{statement:LD}
For constant $\alpha$ and $\beta$, $X^n$ satisfies a large-deviations principle for $n \to \infty$, with rate function $\Jalt_{\alpha,\beta}$ given in~\eqref{def:lardevfun}. Moreover, the minimizer of $\Jalt_{\alpha,\beta}$ satisfies the generalized gradient flow equation~\eqref{eq:xdotfirst}.
\end{statement}

This result is stated in Theorem~\ref{thm:LDF}  with a sketch of the proof as it is presented in the introduction of~\cite{FK06}. In accordance with the discussion above, $\Jalt_{\alpha,\beta}(x)\geq 0$ for all curves $x:[0,T]\to\R$, and $\Jalt_{\alpha,\beta}(x)=0$ if and only if $x$ is a solution of~\eqref{eq:xdotfirst}. 

\medskip

Next we prove that the functional $\Jalt_{\alpha,\beta}$ converges to two functionals $\J_Q$ and to $\J_{RI}$ in the sense of Mosco-convergence, defined in \eqref{def:moscoconvergence}, when $\alpha$ and $\beta$ have the limiting behaviour of Figure~\ref{fig:visc-norm-RI}. The limiting functionals drive respectively a quadratic gradient-flow and a rate-independent evolution. 

\begin{statement}[Connection]
\label{statement:Connection}
\begin{itemize}
\item[A1.] Let $\alpha\to\infty$ and $\beta\to 0$, such that $\alpha\beta\to\omega$, for some $\omega>0$ fixed. Then we have that, after rescaling, $\Jalt_{\alpha,\beta} \to \J_Q$, where $\J_Q$ is defined in~\eqref{eq:defJ0}; moreover $\J_Q(x) =0$ if and only if  $x$ solves~\eqref{eq:Q}.
\item[A2.] Let $\beta \to \infty$ and choose $ \alpha =e^{-\beta A}$, for some $A>0$ fixed. Then, after an appropriate rescaling, $\Jalt_{\alpha,\beta}\to \J_{RI}$, where $\J_{RI}$ is given in~\eqref{def:JRI}; moreover $\J_{RI}(x) = 0$ if and only if $x$ is an appropriately defined solution of~\eqref{eq:RI}.
\end{itemize}
\end{statement}

The Mosco-convergence stated above also implies that minimizers $x_{\alpha,\beta}$ of $\Jalt_{\alpha,\beta}$ converge to the minimizers of $\J_Q$ and $\J_{RI}$ in the corresponding cases, i.e.\ that the solutions $x_{\alpha,\beta}$ of~\eqref{eq:xdotfirst} converge to the solutions of~\eqref{eq:Q} and~\eqref{eq:RI}. Point~A1 is proven in Theorem~\ref{thm:quadlimit}, and point~A2 in Theorem~\ref{thm:RIgammaconv}. 

\medskip
Together, Statements~\ref{statement:LD} and~\ref{statement:Connection} describe a sequential limit process: first we let $n\to\infty$, and then we take limits in $\alpha$ and $\beta$. 
For the quadratic case (A1) we can also combine the limits:

\begin{statement}[Combining the limits]
\label{statement:Commutation}
Let $n\to\infty$, and take $\beta=\beta_n \sim n^{-\delta}$ for some $0\leq \delta\leq 1$; let $\alpha = \alpha_n$ be such that $\alpha_n\beta_n \to\omega$, for some fixed $\omega>0$. \begin{itemize}
\item[B1.] First let $0 < \delta < 1$. Then $X^n$ satisfies a large-deviations principle as $n \to \infty$,  with rate function $\J_Q$; the Markov process $X^n$ has a deterministic limit \eqref{eq:Q}, and this limit minimizes $\J_Q$ (as we already mentioned). 
\item[B2a.] In the case  $\delta=1$, let $\alpha_n \beta_n \to \omega$, $n\beta_n\to 1/h$, for some $\omega,h>0$; then $X^n$ converges to the process $Y^h$ described by the SDE 
\begin{equation}
\label{eq:brownianplusdrif}
dY^h_t=-2\omega \nabla E(Y^h_t,t) \,dt + \sqrt{2\omega h}\;dW_t.
\end{equation}
\item[B2b.] The process $Y^h$ in \eqref{eq:brownianplusdrif} satisfies a large-deviations principle for $h \to 0$ with rate function $\J_Q$.
\item[B3.] The case $\delta=0$ corresponds to point A1, where first $n \to \infty$ and then $\beta \to 0$ and $\alpha \to \infty$.
\end{itemize}
\end{statement}

Point B1 of Statement~\ref{statement:Commutation} is given in Theorem~\ref{thm:lardevres}. Point B2a is given in Theorem~\ref{thm:convprocesses}; Point B2b is the well-known result of Freidlin and Wentzell~\cite[Ch.~4-Th.~1.1]{FW12}, and it is included in Theorem~\ref{thm:lardevres}. 

\begin{remark}
In this paper we consider only the one-dimensional case. We make this choice because the main goal of the paper is to show the connection and the interplay between large deviations and gradient flows, and the one-dimensionality allows us to avoid various technical complications. However, the generalization to higher dimension is in some cases just a change in the notation and does not require any relevant modification in some of the proofs.  The rate-independent limit in higher dimensions is non-trivial and it is the object of work in progress.
\end{remark}

%

\subsection{Modelling}
\label{subsec:modelling}

\begin{figure}[b]
\centering
\begin{tabular}{c}
\begin{tikzpicture}
 \draw[thick] (0,9.7) parabola bend (1,8) (2,8.5);
 \draw[thick] (2,8.5) parabola bend (3,9.2) (4,7.9); 
 \draw[thick] (4,7.9) parabola bend (5,7) (6,7.5);
 \draw[thick] (6,7.5) parabola bend (7,8.2) (8,7);
 \draw[thick] (8,7) parabola bend (9,6.2) (10,7);
 \draw[|-|](1,6.4) -- (5,6.4);
 \draw (3,6.4) node[anchor=north] {$2/n$};
 \draw[->] (-.7,5.7) -- (11,5.7) node[anchor=west] {$x$};
 \draw (0,9.7) node[anchor=south] {$\mathcal E_n(x,t)$};
 \draw[dotted] (3,8.7) -- (7,8.7);
 \draw[<->] (5,7.1) -- (5,8.6);
 \draw (5,7.85) node[anchor=west] {$\frac{1}{n} \Delta \mathrm{e} $};
 \draw[->] (3,8.75) -- (3,9.15);
 \draw (3.2,9.3) node[anchor=west] {$-\frac{1}{n} \nabla E $};
 \draw[<-] (7,8.25) -- (7,8.65);
 \draw (7,8.45) node[anchor=west] {$\frac{1}{n} \nabla E $};
\end{tikzpicture}
\end{tabular}
\caption{The global component $E$ perturbs the height of the energy barriers, leading to the formula for $r^\pm$ in Figure~\ref{fig:birth-death}.}
\label{fig:Kramers}
\end{figure}
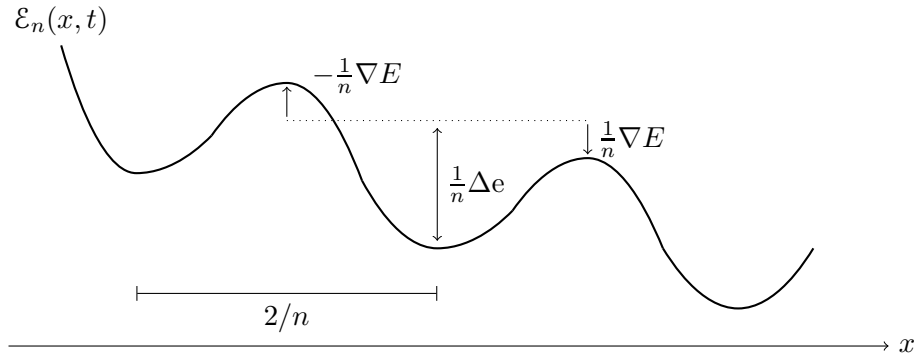

We mentioned above that the rates $r^\pm$ can be derived from Kramers' law; we now give some details. 

In the wiggly energy $\mathcal E_n(x,t) = E(x,t) + n^{-1}\mathrm{e}(nx)$, assume that $E$ varies slowly both in $x$ and $t$, $\mathrm{e}$ has period 2 and let $n$ be large. Then a small patch of the energy landscape of $\mathcal E_n$ looks like Figure~\ref{fig:Kramers}. The height of the energy barriers to the left and right of a well equals $\Delta \mathrm{e} := (\max \mathrm{e}-\min \mathrm{e})$, plus a perturbation from the smooth energy $E$, which to leading order  has size $n^{-1}\nabla E$. 

We now assume that the position $Z_t$ of the system solves the SDE
\[
dZ_t = -\nabla \mathcal E_n(Z_t,t) \, dt + \sqrt{\frac2{\beta n}} \, dW_t.
\]
Here $\beta$ characterizes the noise, and can be interpreted as $1/kT$ as usual, although in this case there is an additional scaling factor $n$. For sufficiently large $\beta$ the rates of escape from a well, to the left and to the right, are given by Kramers' law to be approximately
\begin{equation}
\label{def:heuristicrates}
a\exp \bigl[\beta(-\Delta \mathrm{e} \pm \nabla E)\bigr],
\end{equation}
where the minus sign applies to the rate of leaving to the right. Here $a$ is a constant depending on the form of $\mathrm{e}$~\cite{Kramers40,Berglund11TR}.
Writing 
\begin{equation}
\label{def:alpha}
\alpha = a\exp[-\beta \Delta \mathrm{e}],
\end{equation}
we find the rates $r^\pm$ of Figure~\ref{fig:birth-death}.

In this formula for $\alpha$, it appears that $\alpha$ and $\beta$ are coupled. From a modelling point of view, this is true: if one varies the temperature while keeping all other parameters fixed, then both $\beta$ and $\alpha$ will change. Note that the scaling regime $\alpha\sim e^{-\beta A}$ is exactly this case, with $A=\Delta \mathrm{e}$. On the other hand, the parameter $a$ in~\eqref{def:alpha} is still free, and this allows us to consider $\alpha$ and $\beta$ as independent parameters when necessary. 

\subsection{Outline}

The paper evolves as described in the following. In Section~\ref{sec:GF&LD} we introduce the  concepts of (generalized) gradient flows and of large deviations and we show the connection between the two concepts. In Section~\ref{sec:quadlim} we prove point~A1 of Statement~\ref{statement:Connection} and the whole Statement~\ref{statement:Commutation}. Then in Section~\ref{sec:rilimit} we introduce the space of functions of bounded variations, rate-independent systems, and we prove point~A2 of Statement~\ref{statement:Connection}. We end the paper in Section~\ref{sec:Discussion} with a final discussion.


\section{Gradient Flows \& Large Deviations}
\label{sec:GF&LD}


In the introduction we mentioned that the methods of this paper make use of a certain unity between gradient flows and large-deviations principles: the same functional $\Jalt$ that defines the gradient flow also appears as the rate function of a large-deviations principle. We now describe gradient flows, large deviations, and this functional $\Jalt$.

\subsection{Gradient flows}

Given a $C^1$ energy $E:\R \to \R$, we call a \emph{gradient flow} of $E$ the flow generated by the  equation
\begin{equation*}
\dot{x} = - \nabla E (x).
\end{equation*}
The energy $E$ decreases along a solution, since
\begin{equation*}
\frac{d}{dt}E(x(t))= \nabla E \cdot \dot{x} = - |\nabla E|^2 = - |\dot{x}|^2.
\end{equation*}
Adopting the notation $\psi(\xi)=\frac{\xi^2}{2}$ and $\psi^*(\eta)=\frac{\eta^2}{2}$, this identity can be integrated in time to find
\begin{equation*}
\int_0^T \left( \psi(\dot{x}) + \psi^*(-\nabla E) \right) dt + E(x(T)) - E(x(0)) = 0
\end{equation*}
In this paper we study a generalized concept of gradient flow, considering the energy equality for a broader class of couples $\psi,\psi^*$. We will allow the energy to be also dependent on time. We recall the definition of the Legendre transform: given $\psi: \R \to \R$ we define the transform $\psi^*$ as 
\begin{equation*}
\psi^* (w) := \sup_{v} \left\{ v \cdot w - \psi(v)\right\}.
\end{equation*}
In the following, apart from the rate-independent case, we will assume that $\psi \in C^1(\R)$ is symmetric, superlinear and convex, so that  its Legendre transform $\psi^*$ will share the same properties as well.

A curve $x:[0,T] \to \R$ is an absolutely continuous curve, i.e. $x \in AC(0,T)$, if for every $\e > 0$, there exists a $\delta > 0$ such that, for every finite sequence of pairwise disjoint intervals $(t_j,\tau_{j}) \subset [0,T]$ satisfying $\sum |t_j - \tau_{j}| < \delta$, then $\sum |x(t_j)-x(\tau_{j})| < \e$. The space $AC(0,T)$ coincides with the Sobolev space $W^{1,1}(0,T)$~\cite[Ch.~8]{Brezis11}.

\begin{definition}[Generalized gradient flow]
\label{def:gengraflo}
Given an energy $E \in C^1(\R \times [0,T])$, a convex dissipation potential $\psi \in C^1(\R)$ with $\psi(v)=\psi(-v)$, let $\psi^*$ be its Legendre transform. Then a curve $x \in AC(0,T)$ is a \emph{(generalized) gradient flow} of $E$ with dissipation potential $\psi$ in a given time interval $[0,T]$, if it satisfies the  energy identity
\begin{equation}
\label{eq:gen_ene_equ}
\int_0^T \bigg( \psi(\dot{x}(t)) + \psi^*\bigl(-\nabla E (x(t),t)\bigr) \bigg) dt + E(x(T),T) - E(x(0),0) - \int_0^T \partial_t E(x(t),t) dt = 0.
\end{equation}
\end{definition}

Note that the left-hand side of~\eqref{eq:gen_ene_equ} is non-negative for any function $x$, since
\[
\frac{d}{dt} E(x(t),t)  = \nabla E(x(t),t) \dot x(t) + \partial_t E(x(t),t) 
\geq -\psi(\dot x(t)) - \psi^*(-\nabla E(x(t),t)) + \partial_t E(x(t),t) .
\]
From this inequality one deduces that \emph{equality} in~\eqref{eq:gen_ene_equ}, as required by Definition~\ref{def:gengraflo}, implies that for almost all $t\in[0,T]$
\begin{equation}
\label{eq:GGF}
\dot{x}(t) = \partial \psi^* \bigl(-\nabla E(x(t),t)\bigr),
\end{equation}
where $\partial\psi^*$ is the subdifferential of $\psi^*$. We will not use this form of the equation; the arguments of this paper are based on Definition~\ref{def:gengraflo} instead. 

Existence and uniqueness of classical gradient-flow solutions in $\R$ (i.e.\ with quadratic~$\psi$) follows from classical ODE theory; in recent years the theory has been extended to metric spaces and spaces of probability measures~\cite{AGS}. 

In our case we will require the energy to satisfy the following conditions
\begin{equation}
\label{cond:energy}
\begin{cases}
E \in C^1(\R \times [0,T]), \,  E \geq 0, \\
|\nabla E | \leq R < \infty, \\
\nabla E \text{ is uniformly Lipschitz continuous in } t. \\
\end{cases}
\end{equation}

\begin{remark}
Definition~\ref{def:gengraflo} requires $\psi$ to be strictly convex. 
The rate-independent evolution~\eqref{eq:RI} is formally the case of a non-strictly convex, $1$-homogeneous dissipation potential $\psi$, and for this case there are several natural ways to define a rigorous solution concept. In Section~\ref{sec:rilimit} we show how generalized gradient flows for finite $\alpha$ and $\beta$, with strictly convex $\psi$,  converge to a \emph{specific} rigorous rate-independent solution concept, the so-called \emph{BV solutions}~\cite{MielkeRossiSavare09,MielkeRossiSavare12a}. We define this concept in Definition~\ref{def:RIgraflo}.
\end{remark}

Returning to the unity between gradient flows and large deviations, for generalized gradient flows the functional $\Jalt$ mentioned before is the left-hand side of~\eqref{eq:gen_ene_equ}. 


\subsection{Large deviations}

`Large deviations' of a random variable are rare events, and large-deviations theory characterizes the rarity of certain rare events for a sequence of random variables. 
Let $\left\{ X^n \right\}$ be such a sequence of random variables with values in some metric space.
\begin{definition}[\cite{Var66}]
\label{def:ldp}
$\{ X^n \}$ satisfies a large-deviations principle (LDP) with speed $a_n \to \infty$, if there exists a lower semicontinuous function $\Jalt:S\to [0,\infty]$ with compact sublevel sets such that for each open set $O$, 
\[
\liminf_{n \to \infty} \frac{1}{a_n}\log P(X^n \in O) \geq - \inf_{x \in O}\Jalt(x),
\]
and for each closed set $C$
\[
\limsup_{n \to \infty} \frac{1}{a_n}\log P(X^n \in C) \leq - \inf_{x \in C}\Jalt(x).
\]
The function $\Jalt$ is called the \emph{rate function} for the large-deviations principle.
\end{definition}

Intuitively, the two inequalities above state that 
\[
\text{Prob}(X^n \simeq x) \sim e^{-a_n \Jalt(x)},
\]
where we purposefully use the vague notations $\simeq$ and $\sim$; the rigorous versions of these symbols is exactly given by Definition~\ref{def:ldp}.

\begin{remark}
Typically, the rate function for Markov processes contains a term $I_0$ characterizing the large deviations of the initial state $X^n(0)$. In the following we will always assume that the starting point will be fixed, or at least that $X^n(0) \to x_0$, so that $I_0(x)$ equals $0$ if $x=x_0$, and $+\infty$ otherwise, so we will disregard $I_0$.
\end{remark}

\subsection{The Feng-Kurtz method. }
Feng and Kurtz created a general method to prove large-deviations principles for Markov processes~\cite{FK06}. The method provides both a formal method to calculate the rate functional and a rigorous framework to prove the large-deviations principle. Here we present only the formal calculation.

Consider a sequence of Markov processes $\{ X^n \}$ in $\R$, which we take time-invariant for the moment, and consider the corresponding evolution semigroups $\{ S_n(t) \}$ defined by 
\[
S_n(t)f(x)= \mathbb{E} \left[ f(X^n(t)) \, | \, X^n(0)=x \right], \qquad f \in C_b(\R),
\]
satisfying
\[
\frac{d}{dt}S_n(t)f=\Omega_n S_n(t)f, \qquad S_n(0)f=f,
\]
where $\Omega_n$ is the generator of $X_n$. For any time interval $[0,T]$, where $T$ may be infinite, $X^n(\cdot)$ is an element of the Skorokhod space $D([0,T])$, the space of \emph{cadlag} functions (right-continuous and bounded). 
To obtain the rate functional we  define the \emph{non-linear generator}
\[
\big( \H_nf \big)(x) := \frac{1}{a_n}e^{-a_nf(x)} \big( \Omega_n e^{a_nf} \big)(x).
\]
If $\H_n\to \H$ in some sense, and if $\H f$  depends locally on $\nabla f$, we  then define the \emph{Hamiltonian} $H(x,p)$ through
\[
\H f(x)=:H(x,\nabla f(x)).
\]
By computing the Legendre transform of $H(x,p)$ we obtain the  \emph{Lagrangian}
\[
L(x,\dot{x})=\sup_{p \in \R} \left\{ \dot{x} \cdot p - H(x,p) \right\}.
\]
The Feng-Kurtz method then states, formally, that $\{ X^n \}$ satisfies a large-deviations principle in $D([0,T])$ with speed $a_n$, with a rate function
\begin{equation*}
\Jalt(x)= 
\begin{cases}
\int_0^T L(x,\dot{x}) \,dt & \text{ if }x \in AC(0,T), \\
+ \infty & \text{otherwise}.
\end{cases}
\end{equation*}
In the book~\cite{FK06} a general method is described to make this algorithm rigorous.

\subsection{Large deviations of $X^n$}

We now apply this method to the process $X^n$ described in the introduction. It is a continuous time Markov chain, defined by its generator
\begin{equation}
\label{def:generator}
\Omega_n f (x):= n\alpha e^{-\beta \nabla E(x,t)}\left( f(x+\frac{1}{n})-f(x) \right) + n\alpha e^{\beta \nabla E(x,t)}\left( f(x-\frac{1}{n})-f(x) \right).
\end{equation}
For $f \in C_b(\R)$ the expected value $\E$ is defined as
\[
\E(f(X^n_t)|X^n_0)=\int_{\R}f(z)d\mu^n_t(z),
\]
with, denoting by $\Omega_n^T$ the adjoint of $\Omega_n$,
\begin{equation}
\label{eq:martingale}
\begin{cases}
\partial_t \mu_t=\Omega^T_n\mu_t, \\
\mu_0=\delta_{X^n_0}.
\end{cases}
\end{equation}
where $\mu^n_t$ is the law at time $t$ of the process $X^n$ started at the position $X^n_0$ at time $t=0$. Under the condition \eqref{cond:energy}, the martingale problem \eqref{eq:martingale} is well-posed, since the operator $\Omega_n$ is bounded in the uniform topology.



The rigorous proof of Statement~\ref{statement:LD} consists of the following theorem

\begin{theorem}[Large-deviations principle for $\Omega_n$]
\label{thm:LDF}
Let $E:\R\times[0,T] \to \R$ satisfying condition~\eqref{cond:energy}. Consider the sequence of Markov processes $\{ X^n \}$ with generator $\Omega_n$ defined in~\eqref{def:generator} and with $X^n(0)$ converging to $x_0$ for $n \to \infty$. Then the sequence $X^n$ satisfies a large-deviations principle in $D([0,T])$  with speed $n$ and rate function
\begin{equation}
\label{def:lardevfun}
\Jalt_{\alpha,\beta}(x):= \begin{cases}
\displaystyle \beta \int_0^T \left( \psi_{\alpha,\beta}(\dot{x}) + \psi^*_{\alpha,\beta}(\nabla E) + \dot{x} \nabla E \right) dt.
 & \text{for } x \in AC(0,T),\\
+\infty & \text{otherwise},
\end{cases}
\end{equation}
where 
\begin{equation}
\label{def:dissipation} 
\psi_{\alpha,\beta}(v):=\frac{v}{\beta} \log \left( \frac{ v + \sqrt{ v^2 + 4\alpha^2}}{2 \alpha}\right) - \frac{1}{\beta}\sqrt{v^2 + 4\alpha^2} + \frac{2 \alpha}{\beta}
\qquad\text{and}\qquad
\psi^*_{\alpha,\beta}(w)= \frac{2 \alpha}{\beta}\left( \cosh(\beta w) - 1 \right).
\end{equation}

\end{theorem}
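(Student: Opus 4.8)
The plan is to follow the Feng--Kurtz recipe sketched in Section~\ref{subsec:ldpgf-intro}, with speed $a_n=n$, and to reduce the time-dependent setting to the time-homogeneous one by the standard device of appending the clock to the state: one regards $(X^n_t,t)$ as a time-homogeneous Markov process on $\R\times[0,T]$, whose generator is $\Omega_n$ acting in the $x$-variable plus $\partial_t$. All curves we consider run the clock at unit speed, so the extra variable only contributes a constraint $\dot t=1$ and leaves the form of the Lagrangian unchanged; below I suppress it and write $H$ and $L$ as functions of $(x,p,t)$ and $(x,\dot x,t)$.

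First I would compute the non-linear generator. Since the only jumps are $x\mapsto x\pm 1/n$, inserting $e^{nf}$ into $\Omega_n$ and multiplying by $\frac1n e^{-nf}$ gives
\[
\big(\H_n f\big)(x) = \alpha e^{-\beta\nabla E(x,t)}\left(e^{\,n(f(x+1/n)-f(x))}-1\right) + \alpha e^{\beta\nabla E(x,t)}\left(e^{\,n(f(x-1/n)-f(x))}-1\right).
\]
For $f\in C^2$ the exponents converge to $\pm\nabla f(x)$, so $\H_n f\to\H f$ with $\H f(x)=H(x,\nabla f(x),t)$, a local function of $\nabla f$, and
\[
H(x,p,t) = \alpha e^{-\beta\nabla E}\left(e^{p}-1\right)+\alpha e^{\beta\nabla E}\left(e^{-p}-1\right) = 2\alpha\cosh\!\big(p-\beta\nabla E(x,t)\big)-2\alpha\cosh\!\big(\beta\nabla E(x,t)\big),
\]
the second identity (obtained by expanding the cosh on the right) being the key algebraic simplification.

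Next I would take the Legendre transform in $p$. Substituting $p=\tilde p+\beta\nabla E$ turns the supremum into $\beta\,\dot x\,\nabla E+\big[\sup_{\tilde p}(\tilde p\,\dot x-2\alpha\cosh\tilde p)\big]+2\alpha\cosh(\beta\nabla E)$. The bracketed term is the Legendre transform of $\tilde p\mapsto 2\alpha\cosh\tilde p$, whose derivative $2\alpha\sinh\tilde p$ is inverted by $\tilde p=\log\frac{\dot x+\sqrt{\dot x^2+4\alpha^2}}{2\alpha}$; a short computation gives $\dot x\log\frac{\dot x+\sqrt{\dot x^2+4\alpha^2}}{2\alpha}-\sqrt{\dot x^2+4\alpha^2}$, which is exactly $\beta\psi_{\alpha,\beta}(\dot x)-2\alpha$. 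Since $2\alpha(\cosh\beta\nabla E-1)=\beta\psi^*_{\alpha,\beta}(\nabla E)$, this yields
\[
L(x,\dot x,t)=\beta\Big(\psi_{\alpha,\beta}(\dot x)+\psi^*_{\alpha,\beta}(\nabla E)+\dot x\,\nabla E\Big),
\]
so $\Jalt_{\alpha,\beta}(x)=\int_0^T L\,dt$ is precisely~\eqref{def:lardevfun}. The assertion about the minimizer is then immediate: $\psi_{\alpha,\beta}$ and $\psi^*_{\alpha,\beta}$ form a Legendre pair and $\psi^*_{\alpha,\beta}$ is even, so $L\ge0$ pointwise by Young's inequality, with equality a.e.\ iff $\dot x=\partial\psi^*_{\alpha,\beta}(-\nabla E)=-2\alpha\sinh(\beta\nabla E)$, i.e.\ iff $x$ solves~\eqref{eq:xdotfirst}.

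For the rigorous part I would invoke the framework of~\cite{FK06} (equivalently, Wentzell's results~\cite{Wen77,Wen90}, which cover exactly this class of jump processes). Three ingredients must be checked: exponential tightness of $\{X^n\}$ in $D([0,T])$, which follows from the uniform bound $r^{\pm}\le\alpha e^{\beta R}$ on the rates provided by~\eqref{cond:energy}, since the process is then dominated, in number of jumps, by a Poisson process and cannot move too fast except with exponentially small probability; convergence $\H_n\to\H$ of the non-linear generators together with the regularity and growth of $H$ in $(x,p)$; and the comparison principle for the limiting Hamilton--Jacobi equation $\partial_t u=H(x,\nabla u,t)$, from which Feng--Kurtz deduce both the LDP and the variational (Lagrangian) representation of the rate function. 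I expect this last point---the comparison principle for the time-dependent Hamiltonian $H(x,p,t)=2\alpha\cosh(p-\beta\nabla E(x,t))-2\alpha\cosh(\beta\nabla E(x,t))$---to be the only non-routine step, and it is where the Lipschitz dependence of $\nabla E$ on $x$ and $t$ in~\eqref{cond:energy} is used; the non-linear-generator computation and the Legendre transform above are entirely explicit.
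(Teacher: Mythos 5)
Your proposal is correct and follows essentially the same route as the paper: the Feng--Kurtz scheme with speed $n$, the space-time augmentation $(X^n_t,t)$ to handle the time dependence, the explicit computation of $\H_n\to H$ and of the Legendre transform yielding $L=\beta(\psi_{\alpha,\beta}(\dot x)+\psi^*_{\alpha,\beta}(\nabla E)+\dot x\nabla E)$, and an appeal to the rigorous framework of Feng--Kurtz/Wentzell (exponential tightness, operator convergence, comparison principle) for the analytic part — which is exactly how the paper proceeds in Section~\ref{subsec:calc} and in the proof of Theorem~\ref{thm:lardevres}. Your rewriting of the Hamiltonian as $2\alpha\cosh(p-\beta\nabla E)-2\alpha\cosh(\beta\nabla E)$ is a clean way to organize the Legendre transform that the paper leaves implicit, but it is not a different method.
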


The proof can be found in  \cite[Ch.~5-Th.~2.1]{FW12} when the energy $E$ is independent of time. In the general case of a time-dependent energy, the proof follows considering a space-time process, as shown in the proof of Theorem~\ref{thm:lardevres}.

Note that solutions of $\Jalt_{\alpha,\beta}(x) = 0$ satisfy the gradient-flow equation~\eqref{eq:GGF}, which in this case indeed is equation~\eqref{eq:xdotfirst}, i.e.
\[
\dot{x}=-2\alpha \sinh (\beta \nabla E(x,t)).
\]

\medskip
In the remainder of this paper we will consider sequences in $\alpha$ and $\beta$; to reduce notation we will drop the double index, writing $\psi_\beta$ and $\psi^*_{\beta}$ for $\psi_{\alpha,\beta}$ and $\psi^*_{\alpha,\beta}$; similarly we define the rescaled  functional~$\J_{\beta}$,
\begin{equation}
\label{def:jbetanotalpha}
\J_\beta(x) := \frac{1}{\beta}\Jalt_{\alpha,\beta}(x)= \begin{cases}
\displaystyle \int_0^T \bigg( \psi_{\beta}(\dot{x}(t)) + \psi_{\beta}^*(-\nabla E(x(t),t)) + \dot{x}(t) \nabla E(x(t),t) \bigg) \,dt & \text{for } x \in AC(0,T),\\
+\infty & \text{otherwise}.
\end{cases}
\end{equation}

\begin{remark}
The theorem above shows how the rate function $\Jalt_{\alpha,\beta}$ can be interpreted as defining a generalized gradient flow. This illustrates the structure of the fairly widespread connection between gradient flows and large deviations: in many systems the rate function not only defines the gradient-flow evolution, through its zero set, but the components of the gradient flow ($E$ and $\psi$) can be recognized in the rate function. This connection is explored more generally in~\cite{MielkePeletierRenger13TR} as we describe in the following. Define a so-called $L$-function $\mathcal{L}(z,s)$, positive, convex in $s$ for all $z$ and inducing an evolution equation. The authors of~\cite[Lemma~2.1 and Prop.~2.2]{MielkePeletierRenger13TR} show that if $ D_s \mathcal{L}(z,0)$ is an exact differential, say $DS(z)$, then it is possible to write $\mathcal{L}$ as
\begin{equation}
\label{def:MPRgradflow}
\mathcal{L}(z,s)=\Psi(z,s)+\Psi^*(z,-DS(z)) + \langle DS(z),s	 \rangle,
\end{equation}
where $\Psi^*$ can be expressed in terms of the Legendre transform $\mathcal{H}(z,\xi)$   of $\mathcal{L}$ as
\[
\Psi^*(z,\xi):=\mathcal{H}(z,DS(z)+\xi)-\mathcal{H}(z,DS(z)).
\] 

Applying the same procedure to our case, with $\mathcal{L}=L$ defined in \eqref{def:Lfunctionlagrangian}, we obtain after some calculations that
\[
S(z):=\frac{1}{2}\log\left( \frac{r^-(z)}{r^+(z)} \right).
\]
Substituting into $S(z)$ our choice for $r^+$ and $r^-$
\[
r^+=\alpha e^{-\beta \nabla E(z)}, \qquad r^-=\alpha e^{\beta \nabla E(x)},
\]
it follows  that $S(z)=\beta E(z)$.  
\end{remark}

\subsection{Calculating the large-deviations rate functional for \eqref{def:generator} }
\label{subsec:calc}
We conclude this section by calculating the rate function for the simpler situation when the jump rates $r^\pm$ are constant in space and time, as it is shown in the introduction of \cite{FK06}. This  formally proves Theorem~\ref{thm:LDF}, substituting in the end the expression or $r^{\pm}$ from \eqref{def:generator}. 

With constant jump rates, the generator reduces to 
\[
\Omega_n f(x)=nr^+\left[ f \left( x+\frac{1}{n} \right) - f(x) \right] + nr^-\left[ f \left( x-\frac{1}{n} \right)-f(x) \right],
\]
and for $n \to \infty$ it converges to $\Omega f(x)=(r^+-r^-)  \nabla f(x)$. As we said in the introduction, the process $X^n$ has a deterministic limit, i.e. $X^n \to x$ a.s., with $\dot{x}=r^+-r^-$. 

In order to calculate the rate functional, we compute the non-linear generator and the limiting Hamiltonian and Lagrangian. We have 
\[
\H_nf(x) = r^+\left[ e^{n(f ( x+1/n) - f(x))}-1 \right] + r^-\left[ e^{n(f ( x-1/n) - f(x))}-1\right],
\]
so that
\[
\lim_{n \to \infty} \H_n = H(x,p)=r^+ \left( e^p - 1 \right) + r^- \left( e^{-p} - 1 \right).
\]
We then obtain by an explicit calculation the Lagrangian
\begin{equation}
\label{def:Lfunctionlagrangian}
L(x,\dot{x})=\dot{x}\log \left( \frac{\dot{x}+\sqrt{\dot{x}^2+4r^+r^-}}{2r^+}\right)-\sqrt{\dot{x}^2+4r^+r^-}+r^+ + r^-,
\end{equation}
and substituting $r^+$ and $r^-$ with the corresponding ones from \eqref{def:generator} we get
\[
L(x,\dot{x})=\beta \left( \psi(\dot{x}) + \psi^*(\nabla E) + \dot{x} \nabla E\right), 
\]
and we formally prove Theorem~\ref{thm:LDF}.


\section{The Quadratic Limit}
\label{sec:quadlim}

In this section we precisely state and prove point A1 of Statement~\ref{statement:Connection} and the whole of Statement~\ref{statement:Commutation}. We are in the regime where $\beta \to 0$, $\alpha \to \infty$, with $\alpha \beta \to \omega$.

First we show heuristically why the functional $\J_{\beta}$ defined in \eqref{def:jbetanotalpha} is expected to converge to $\J_Q$ defined in \eqref{eq:defJ0}. Looking at the equation that minimises the functional $\J_{\beta}$, and doing a Taylor expansion for $\beta \ll 1$,
\[
\dot{x} = - 2 \alpha \sinh ( \beta \nabla E) \simeq -2 \alpha \beta \nabla E \to -2 \omega \nabla E.
\]
Considering the functional $\J_{\beta}$ for $\beta \ll 1$ it can be seen that 
\begin{equation*}
\psi^*_{\beta}(w)=\frac{2\alpha}{\beta} \left( \cosh(\beta w) - 1 \right) \simeq \alpha \beta w^2 \to \omega w^2,
\end{equation*}
\begin{equation*}
\psi_{\beta}(v)=\frac{v}{\beta}\log \left(\frac{v+\sqrt{v^2+4\alpha^2}}{2\alpha} \right)- \frac{1}{\beta} \sqrt{v^2+4\alpha^2} + \frac{2\alpha}{\beta} 
\simeq \frac{v^2}{4\alpha \beta},
\end{equation*}
implying that
\[ \psi^*_{\beta}(w)  \to \omega w^2, \; \qquad \psi_{\beta}(v) \to \frac{v^2}{4 \omega}. \]

We now turn to the rigorous proof of the convergence to the quadratic gradient flow and therefore point A1 of Statement~\ref{statement:Connection}. For this we need the concept of Mosco-convergence. Given a sequence of functionals $\phi_n$ and $\phi$ defined on a space $X$ with weak and strong topology, $\phi_n$ is said to \emph{Mosco-converge} to $\phi$ ($ \phi_{n} \stackrel{M}{\to} \phi$) in the weak-strong topology of $X$  if
\begin{equation}
\label{def:moscoconvergence}
\begin{cases}
\forall \; x_n \rightharpoonup x  \text{ weakly, } & \liminf \phi_n (x_n) \geq \phi(x), \\
\forall \;x \;\exists \; x_n \to x \text{ strongly such that } & \limsup \phi_n (x_n) \leq \phi(x).
\end{cases}
\end{equation}

The gradient-flow Definition~\ref{def:gengraflo} is based on the function space $AC(0,T)$. 
We define weak and strong topologies on the space $AC(0,T)$ by using the equivalence with $W^{1,1}(0,T)$.
Let $x,x_n \in AC(0,T)$. We say that $x_n$ converges weakly to $x$ ($x_n \rightharpoonup x$) if $x_n \to x$ strongly in $L^1(0,T)$ and $\dot{x}_n \rightharpoonup \dot{x}$ weakly in $L^1(0,T)$, i.e. in $\sigma(L^1,L^\infty)$; we say that $x_n$ converges strongly to $x$ ($x_n \to x$) if in addition $\dot{x}_n \to \dot{x}$ strongly in $L^1(0,T)$.

\begin{theorem}[Convergence to the quadratic limit]
\label{thm:quadlimit}
Given $E: \R \times [0,T] \to \R$ satisfying condition~\eqref{cond:energy}, for $x \in AC(0,T)$  consider the functional $\J_{\beta}$
\begin{equation*}
\J_{\beta}(x)= \int_0^T  \bigg( \psi_{\beta}(\dot{x}(t)) + \psi_{\beta}^*(-\nabla E(x(t),t)) + \dot{x}(t) \nabla E(x(t),t) \bigg)  dt,
\end{equation*}
then,  for $\alpha \to \infty$, $\beta \to 0$ and $\alpha \beta \to \omega >0$, $\J_{\beta} \xrightarrow{M} \J_Q$ in the weak-strong topology of $AC(0,T)$, with
\begin{equation}
\label{eq:defJ0}
\displaystyle \J_{Q}(x) :=  \int_0^T \left(  \frac{\dot{x}^2(t)}{4\omega}  + \omega (\nabla E)^2(x(t),t)  + \dot{x}(t) \nabla E(x(t),t) \right) dt.
\end{equation}
Moreover, if a sequence $\{ x_{\beta} \}$ is such that $\J_{\beta}(x_{\beta})$ is bounded, and
\begin{equation*}
E(x_{\beta}(0),0) + \int_0^T \partial_t E(x_{\beta}(s),s) \,ds \leq C \qquad \forall \beta,
\end{equation*}
then the sequence $\{ \dot{x}_{\beta} \}$ is compact in the topology $\sigma (L^1,L^{\infty})$.
\end{theorem}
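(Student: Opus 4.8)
The plan is to verify the two Mosco conditions and then the compactness claim. The \emph{recovery inequality} is the routine half: given $x\in AC(0,T)$ I take the constant sequence $x_\beta\equiv x$, which converges strongly, so it suffices to show $\J_\beta(x)\to\J_Q(x)$. If $\J_Q(x)=+\infty$ there is nothing to prove; otherwise $\dot x\in L^2(0,T)$, and from the closed forms in~\eqref{def:dissipation} one reads off the pointwise limits $\psi_\beta(v)\to v^2/(4\omega)$ and $\psi^*_\beta(w)\to\omega w^2$, together with the $\beta$-uniform bounds $0\le\psi_\beta(v)\le v^2/(2\omega)$ (writing $\psi_\beta(v)=\tfrac{2\alpha}{\beta}\,g(v/(2\alpha))$ with $g(s):=s\,\mathrm{arcsinh}\,s-\sqrt{s^2+1}+1$, using $g(s)\le s^2/2$ and $\alpha\beta\ge\omega/2$ eventually) and $0\le\psi^*_\beta(-\nabla E)\le\tfrac{2\alpha}{\beta}(\cosh\beta R-1)\le C$; dominated convergence then gives $\J_\beta(x)\to\J_Q(x)$.

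For the \emph{liminf inequality}, take $x_\beta\rightharpoonup x$ in $AC(0,T)$ and assume $\liminf_\beta\J_\beta(x_\beta)<\infty$ (otherwise the inequality is trivial), passing to a subsequence realising the liminf. Weak convergence $\dot x_\beta\rightharpoonup\dot x$ in $\sigma(L^1,L^\infty)$ entails, by Dunford--Pettis, that $\{\dot x_\beta\}$ is equi-integrable and $L^1$-bounded; hence $\{x_\beta\}$ is equibounded and equicontinuous, so by Arzel\`a--Ascoli together with $x_\beta\to x$ in $L^1$ a further subsequence converges uniformly. Continuity and boundedness of $\nabla E$ then give $\nabla E(x_\beta(\cdot),\cdot)\to\nabla E(x(\cdot),\cdot)$ uniformly; combined with the uniform convergence $\psi^*_\beta(w)\to\omega w^2$ on $[-R,R]$ this makes the second term converge, $\int_0^T\psi^*_\beta(-\nabla E(x_\beta,t))\,dt\to\omega\int_0^T\nabla E(x,t)^2\,dt$, and pairing the weakly convergent $\dot x_\beta$ with the uniformly (hence $L^\infty$-) convergent $\nabla E(x_\beta(\cdot),\cdot)$ gives $\int_0^T\dot x_\beta\,\nabla E(x_\beta,t)\,dt\to\int_0^T\dot x\,\nabla E(x,t)\,dt$. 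For the first term I use Legendre duality: for every $\phi\in L^\infty(0,T)$ Young's inequality gives $\int_0^T\psi_\beta(\dot x_\beta)\,dt\ge\int_0^T\bigl(\phi\dot x_\beta-\psi^*_\beta(\phi)\bigr)\,dt$, and letting $\beta\to0$ (weak convergence of $\dot x_\beta$ in the first summand, uniform convergence $\psi^*_\beta(w)\to\omega w^2$ on the bounded range of $\phi$ in the second) yields $\liminf_\beta\int_0^T\psi_\beta(\dot x_\beta)\,dt\ge\int_0^T(\phi\dot x-\omega\phi^2)\,dt$; choosing $\phi=\phi_N$, the truncation of $\dot x/(2\omega)$ to $[-N,N]$, and letting $N\to\infty$ by monotone convergence gives $\liminf_\beta\int_0^T\psi_\beta(\dot x_\beta)\,dt\ge\int_0^T\dot x^2/(4\omega)\,dt$. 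Adding the three contributions gives $\liminf_\beta\J_\beta(x_\beta)\ge\J_Q(x)$.

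For the \emph{compactness} statement, the chain rule for $t\mapsto E(x_\beta(t),t)$ gives $\int_0^T\dot x_\beta\,\nabla E(x_\beta,t)\,dt=E(x_\beta(T),T)-E(x_\beta(0),0)-\int_0^T\partial_t E(x_\beta,t)\,dt\ge -C$ by $E\ge0$ and the hypothesis, and since $\psi^*_\beta\ge0$ we get $\int_0^T\psi_\beta(\dot x_\beta)\,dt\le\J_\beta(x_\beta)+C$, bounded uniformly in $\beta$. I then use the explicit superlinear shape of $\psi_\beta$. On $\{|\dot x_\beta|\le2\alpha\}$ the bound $g(s)\ge s^2/(2\sqrt2)$ for $|s|\le1$ gives $\psi_\beta(v)\ge v^2/(4\sqrt2\,\alpha\beta)\ge v^2/(8\sqrt2\,\omega)$ for $\beta$ small, so $\dot x_\beta\ind_{\{|\dot x_\beta|\le2\alpha\}}$ is bounded in $L^2$. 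On $\{|\dot x_\beta|>2\alpha\}$ the logarithmic growth of $g$ at infinity ($g(s)\ge s\log s-s$ for $s\ge e$, and $g(s)\ge g(1)>0$ for $s\ge1$), together with the fact that the threshold $2\alpha\to\infty$, forces $\int_{\{|\dot x_\beta|>2\alpha\}}|\dot x_\beta|\,dt\le C'\beta\to0$. These two facts give $\sup_\beta\int_{\{|\dot x_\beta|>M\}}|\dot x_\beta|\,dt\le C''/M+o_\beta(1)$; since the finitely many terms with $\beta$ bounded away from $0$ are individually in $L^1$, $\{\dot x_\beta\}$ is equi-integrable and $L^1$-bounded, hence relatively compact in $\sigma(L^1,L^\infty)$ by Dunford--Pettis.

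The main obstacle is the liminf for $\int_0^T\psi_\beta(\dot x_\beta)\,dt$: one has only weak $L^1$-compactness of $\{\dot x_\beta\}$ and, at the same time, a $\beta$-dependent integrand, so neither lower semicontinuity of a single convex integral functional nor strong convergence of $\dot x_\beta$ is available; the Legendre-duality lower bound by affine functions, combined with the locally uniform convergence $\psi^*_\beta(w)\to\omega w^2$, is what gets around this. A secondary technical nuisance, relevant only to the compactness part, is that $\psi_\beta$ degenerates to the purely quadratic limit as $\beta\to0$ and its ``$L\log L$'' regime is pushed out to the $\beta$-dependent scale $|v|\sim2\alpha$, which is why the real line is split at $|v|=2\alpha$ rather than at a fixed level.
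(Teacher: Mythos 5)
Your proof is correct, and in two places it takes a genuinely different route from the paper's. For the $\liminf$ of the $\psi_\beta$ term the paper does \emph{not} use Legendre duality against test functions $\phi\in L^\infty$ as you do; it instead exploits a monotonicity property of the family, namely that $\psi_{\beta}\geq\psi_{\overline{\beta}}$ whenever $\beta\leq\overline{\beta}$ (with $\alpha=\omega/\beta$), reduces to lower semicontinuity of the single convex integral functional $x\mapsto\int\psi_{\overline{\beta}}(\dot x)\,dt$ under $\sigma(L^1,L^\infty)$-convergence, and finally sends $\overline{\beta}\to0$. Your duality argument is a reasonable alternative: it is somewhat longer but does not rely on the pointwise monotonicity of the family in $\beta$, which in the paper is only clean under the exact coupling $\alpha\beta=\omega$ rather than the stated $\alpha\beta\to\omega$. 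For the middle term the paper uses Fatou plus the elementary inequality $2\cosh\theta\geq 2+\theta^2$, whereas you obtain the full limit via uniform convergence of $x_\beta$ and local uniform convergence of $\psi^*_\beta$; both work, and your route gives slightly more than is needed.

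The place where your argument is substantially heavier is compactness. The paper again uses the same monotonicity: from $\int\psi_\beta(\dot x_\beta)\,dt\leq C$ one gets $\int\psi_1(\dot x_\beta)\,dt\leq C$ for all $\beta\leq1$, and Dunford--Pettis applies directly because $\psi_1$ is a \emph{fixed} superlinear integrand. Your decomposition at the $\beta$-dependent threshold $|v|=2\alpha$, with an $L^2$ bound on the ``small'' region and a vanishing $L^1$ contribution from the ``$L\log L$'' region, reaches the same conclusion but at the cost of tracking asymptotic constants in $g(s)=s\,\mathrm{arcsinh}\,s-\sqrt{s^2+1}+1$; if you want the cleaner route you could simply observe, as the paper does, that the family $\{\psi_\beta\}_{\beta\leq1}$ is bounded below by the superlinear $\psi_1$. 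Your recovery argument (constant sequence, dominated convergence, with the domination coming from the bound $\psi_\beta(v)\leq v^2/(4\alpha\beta)$) is essentially the same as the paper's use of local uniform convergence.
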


\begin{proof}
First we prove the \emph{Mosco-convergence}.
The \emph{lim-sup} condition follows because, for $\beta \to 0$ 
\begin{equation*}
\psi_{\beta}(\eta) \to \frac{\eta^2}{4\omega}, \qquad  \psi^*_{\beta}(\xi) \to \omega \xi^2, \qquad \text{  locally uniformly.}
\end{equation*}
By the local uniform convergence we can choose the recovery sequence to be the trivial one.

Now we prove the \emph{lim-inf} inequality. The uniform convergence of $x_{\beta}$ to $x$ implies that we can pass to the limit in the terms $E(x_{\beta}(\cdot),\cdot)$ and $\int \partial_t E(x_{\beta},t) \,dt$. Then applying Fatou's lemma we find 
\begin{equation*}
\liminf_{\beta \to 0} \int_0^T \frac{2\alpha}{\beta}\left( \cosh(\beta \nabla E(x_{\beta},t))-1 \right) \,dt 
\geq \liminf_{\beta \to 0}  \int_0^T \alpha \beta (\nabla E)^2(x_{\beta},t) \,dt \geq  \int_0^T \omega (\nabla E)^2(x,t)\, dt,
\end{equation*}
where we used the inequality $2\cosh (\theta) \geq 2 + \theta^2$.
The function $\psi_{\beta}^*$ is non-increasing for $\beta \to 0$ so, for any $\beta \leq \overline{\beta}$, we have  $\psi_{\beta} \geq \psi_{\overline{\beta}}$. Then 
\begin{equation*}
\liminf_{\beta \to 0} \int_0^T \psi_{\beta}(\dot{x}_{\beta}) \,dt \geq \liminf_{\beta \to 0} \int_0^T \psi_{\overline{\beta}}(\dot{x}_{\beta}) \, dt \geq \int_0^T \psi_{\overline{\beta}}(\dot{x}) \, dt,
\end{equation*}
and we conclude taking the limit  $\overline{\beta}\to0$. 


Now we prove the \emph{compactness}. Let us suppose that  $\J_{\beta}(x_{\beta})$ and $E(x_{\beta}(0),0) + \int_0^T \partial_t E(x_{\beta},t)dt$  are bounded. Then, by the positivity of $\psi^*_{\beta}$, 
\begin{equation*}
\int_0^T \psi_{\beta}(\dot{x}_{\beta}) \, dt \leq C < \infty , \qquad \forall \beta.
\end{equation*}
With the choice $\overline{\beta}=1$ we have
\begin{equation*}
\int_0^T \psi_1(\dot{x}_{\beta}) \,dt \leq C \qquad \forall \beta \leq  1,
\end{equation*}
and the compactness in $\sigma(L^1,L^\infty)$ follows from the Dunford-Pettis theorem (e.g.~\cite[Th.~4.30]{Brezis11}).
\end{proof}

Note that this result can also be obtained by the abstract method of Mielke~\cite[Th.~3.3]{Mielke14TR}. Also note that the result can also be formulated in the weak-strict convergence of $BV$; for the lower semicontinuity this follows since the  weak convergence in $BV$ with bounded $\J_\beta$ implies weak convergence in $AC$, and for the recovery sequence it follows from our choice of the trivial sequence. 

\medskip

We end this section with two theorems completing the proof of Statement~\ref{statement:Commutation}, pictured in the right hand side of Figure~\ref{fig:ideaofpaper}. 

\medskip

First we define for each $h>0$ the SDE
\begin{equation}
\label{def:BMgraddrif} 
dY^h_t = -2 \omega \nabla E(Y^h_t,t) dt + \sqrt{2\omega h} dW_t,
\end{equation}
where $W_t$ is the Brownian motion on $\R$, $Y^h_0$ has law $\delta_{x_0}$ and its  generator $\Omega$ is defined as 
\begin{equation}
\label{eq:definitioOmegaDrift}
\Omega f (x) = -2 \omega \nabla E(x,t) \nabla f(x) + \omega h \Delta f(x).
\end{equation}
For $f \in C_b(\R)$ the expected value $\E$ is defined as
\[
\E(f(Y^h_t)|x_0)=\int_{\R}f(z)d\mu_t(z),
\]
with, denoting by $\Omega^T$ the adjoint of $\Omega$,
\[ 
\begin{cases}
\partial_t \mu_t=\Omega^T\mu_t= -2 \omega \nabla \cdot (\mu_t \nabla E  ) + \omega h \Delta \mu_t, \\
\mu_0=\delta_{x_0}.
\end{cases}
\]
where $\mu_t$ is the law at time $t$ of the process $Y^h$ started at the position $x_0$ at time $t=0$. Then the following theorems hold.

\begin{theorem}[Large deviations for the processes $X^n$ and $Y^h$]
\label{thm:lardevres}
Given an energy $E$ satisfying condition \eqref{cond:energy}, fix \/ $0 < \delta < 1$ and consider the sequence of processes $\{ X^n \}$ with generator\/ $\Omega_n$ defined in \eqref{def:generator} with $\beta=n^{-\delta}$ and\/ $\alpha \beta \to \omega$ for $n \to \infty$. Then, if $X^n(0) \to x_0$, the process $X^n$ satisfies a large-deviations principle in $D([0,T])$ with speed $n^{1-\delta}$ and with  rate function the extension of $\J_{Q}$ in~\eqref{eq:defJ0} to $BV$:
\begin{equation*}
\J_Q(x):=
\begin{cases} 
\displaystyle \int_0^T \left(  \frac{\dot{x}^2(t)}{4\omega}  + \omega (\nabla E)^2(x(t),t)  + \dot{x}(t) \nabla E(x(t),t) \right) dt & \text{ for } x \in AC(0,T), \\
+\infty & \text{otherwise}.
\end{cases}
\end{equation*}
Moreover, as $h\to0$ the process $Y^h$ defined in \eqref{def:BMgraddrif} satisfies a large-deviations principle in $D([0,T])$ with speed $h^{-1}$ and also with  rate function $\J_Q$.
\end{theorem}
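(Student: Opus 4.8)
The plan is to prove the two large-deviations principles separately; in both cases I would use the Feng--Kurtz scheme of Section~\ref{sec:GF&LD}, reducing to an autonomous problem by passing to the space-time process $\hat X^n_t=(X^n_t,t)$ on $\R\times[0,T]$, whose generator adds a deterministic drift $\partial_t$ to $\Omega_n$. For $X^n$ with $\beta=n^{-\delta}$ and speed $a_n=n^{1-\delta}$, the first step is to compute the nonlinear generator; using that $a_n/n=\beta$ and that for smooth $f$ one has $a_n\big(f(x\pm\tfrac1n)-f(x)\big)=\pm\beta\nabla f(x)+O(\beta/n)$, a Taylor expansion in $\beta\to0$ shows that the terms linear in $\nabla f$ cancel by the symmetry of the rates $r^\pm$, while the quadratic terms combine into $\tfrac{n\alpha}{a_n}\,\beta^2\big((\nabla f)^2-2\nabla E\,\nabla f\big)$ plus a remainder $\tfrac{n\alpha}{a_n}O(\beta^3)$. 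Since $\tfrac{n\alpha\beta^2}{a_n}=(\alpha\beta)\cdot\tfrac{n\beta}{n^{1-\delta}}\to\omega$ and $\tfrac{n\alpha}{a_n}O(\beta^3)=O(\omega\beta)\to0$, this gives $\H_n f\to\H f$ with
\[
\H f(x,t)=\partial_t f+H(x,t,\nabla f),\qquad H(x,t,p)=\omega p^2-2\omega\,\nabla E(x,t)\,p .
\]
The Legendre transform of $H$ in $p$ is $L(x,t,v)=\tfrac1{4\omega}\big(v+2\omega\nabla E\big)^2=\tfrac{v^2}{4\omega}+v\,\nabla E+\omega(\nabla E)^2$, whose time integral is precisely $\J_Q$; note this is independent of $\delta\in(0,1)$, which is the reason the same functional appears throughout.

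Next I would check the hypotheses of the Feng--Kurtz theorem \cite{FK06}. Operator convergence $\H_n\to\H$ in the required sense is the expansion above together with uniform control of the Taylor remainders, which uses only $|\nabla E|\le R$ and the local boundedness of derivatives of the test functions. Exponential tightness at speed $a_n$ follows from the bounded-rate structure $r^\pm=\alpha e^{\mp\beta\nabla E}\in[\alpha e^{-\beta R},\alpha e^{\beta R}]$: applying the exponential-martingale estimate to $f(x)=\pm x$ yields $\H_n(\pm x)\to\omega(1\mp2\nabla E)$, which is uniformly bounded, so $\sup_t|X^n_t|$ and the modulus of continuity of $X^n$ are controlled exponentially. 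Finally one needs the comparison principle for the Hamilton--Jacobi equation $\partial_t u=H(x,t,\nabla u)$; since $p\mapsto H$ is convex with quadratic growth and $x\mapsto\nabla E(x,t)$ is Lipschitz uniformly in $t$, this is classical. Feng--Kurtz then delivers the LDP in $D([0,T])$ with speed $n^{1-\delta}$ and rate function $\int_0^T L\,dt$, extended by $+\infty$ off $AC(0,T)$. That this is a good rate function on $D([0,T])$ follows because a bound on the action, combined with $|\nabla E|\le R$ and Young's inequality, bounds $\int_0^T\dot x^2\,dt$, hence gives equicontinuity and compactness of sublevels in $C([0,T])\subset D([0,T])$, while lower semicontinuity comes from convexity of $v\mapsto L(x,t,v)$.

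For the process $Y^h$ the claim is the classical Freidlin--Wentzell theorem \cite{FW12} for the small-noise SDE~\eqref{def:BMgraddrif}: with constant diffusion coefficient $\sqrt{2\omega}$ and drift $b(x,t)=-2\omega\nabla E(x,t)$ (bounded and Lipschitz in $x$), the rate function at speed $h^{-1}$ is $\tfrac1{4\omega}\int_0^T|\dot x+2\omega\nabla E|^2\,dt=\J_Q$. As a consistency check, the Feng--Kurtz computation applied to the generator $\Omega$ in~\eqref{eq:definitioOmegaDrift} at speed $h^{-1}$ produces the same Hamiltonian $H(x,t,p)=\omega p^2-2\omega\nabla E\,p$ (the term $\omega h\,\Delta f$ vanishes in the limit), which is precisely why the two limits on the right side of Figure~\ref{fig:ideaofpaper} commute.

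I expect the main obstacle to be the verification of the technical conditions of the Feng--Kurtz framework --- in particular the comparison principle for the non-autonomous Hamilton--Jacobi equation and the precise operator-convergence and exponential-tightness hypotheses --- rather than the essentially bookkeeping identification of $H$ and $L$. The time-dependence of $E$ is handled uniformly by the space-time device, and the coincidence of the limiting Hamiltonians in the two parts, forced by the scaling $n\alpha\beta^2/a_n\to\omega$, is what makes the whole diagram consistent.
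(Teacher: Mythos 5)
Your proposal is correct and follows essentially the same route as the paper: the space--time device to handle the time-dependence, the Feng--Kurtz expansion of the nonlinear generator yielding $H(x,t,p)=\omega p^2-2\omega\nabla E\,p$ with the scaling $n\alpha\beta^2/a_n\to\omega$, the Legendre transform giving $\J_Q$, verification of exponential tightness and the comparison principle (the paper cites \cite[Cor.~4.17]{FK06} and adapts \cite[Example~6.11]{FK06} with an extra time-penalization term, where you sketch equivalent arguments), and Freidlin--Wentzell for $Y^h$. No gaps of substance.
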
 

\begin{proof}

The proof of the large-deviation principle for $X^n$ relies on the fulfilment of three conditions, namely \emph{convergence of the operators} $H_n$, \emph{exponential tightness} for the sequence of processes $X^n$, and the \emph{comparison principle} for the limiting operator $H$, following the steps of \cite[Sec.~10.3]{FK06}.


We restrict ourselves, for sake of simplicity, to the case $\alpha = \omega / \beta$, and let $m = n^{1-\delta}$. To treat the time dependence we use the standard procedure of converting a time-dependent process into a time-independent process by adding the time to the state variable (see e.g. \cite[Sec.~4.7]{EthierKurtz86}): consider the variable $u=(x,t) \in \R \times [0,T]$, $f \in C^2_{c}(\R \times [0,T])$,  and given $\Omega_n$ defined in \eqref{def:generator}, we define  $Q_n$  as
\begin{equation}
\label{def:spacetimeprocess}
Q_n f(u)= \Omega_n f(x,t) + \partial_t f (x,t). 
\end{equation}
With $m=n^{1-\delta}$, we have,
\[
H_n f(u) = \frac{1}{m} e^{-mf(u)} Q_n e^{mf}(u) = \frac{1}{m} e^{-mf(u)} Q_n e^mf(u) + \partial_t f(u).
\]
Now, with the convention $u + 1 / n = (x + 1 / n , t)$,
\begin{align*}
H_n f(u) = & \;  \omega n^{2\delta}  \Biggl\{e^{-n^{-\delta} \nabla E(u)} \Bigl(e^{n^{1-\delta}(f(u+1/n)-f(u))} - 1\Bigr)
+ e^{n^{-\delta} \nabla E(u)} \Bigl(e^{n^{1-\delta}(f(u-1/n)-f(u))} - 1\Bigr)\Biggr\} + \partial_{t}f(u)\\
= & \; \omega n^{2\delta} \left[ \left(1-n^{-\delta}\nabla E(u) + o(n^{-\delta})\right) \left( n^{-\delta}\nabla f(u) + n^{-2\delta}\frac{1}{2}(\nabla f)^2(u) + o(n^{-2\delta}) \right) \right. \\
& \left.+ \left( 1+n^{-\delta}\nabla E(u) + o(n^{-\delta}) \right) \left(-n^{-\delta}\nabla f(u) + n^{-2\delta}\frac{1}{2}(\nabla f)^2(u) + o(n^{-2\delta}) \right) \right] + \partial_{t} f(u)\\
=& \; -2\omega\nabla E(u)\nabla f(u) + \omega (\nabla f)^2(u)  + \partial_{t} f (u) + o(1),
\end{align*}
implying  convergence in the uniform topology,
\[ 
\lim_{n \to \infty} \| H_n f - H f \|_{\infty}=0,
\]
to 
\[
H f (u) = - 2\omega \nabla E(u) \nabla f(u) + \omega (\nabla f)^2(u) + \partial_t f(u).
\]
The \emph{exponential tightness} holds by \cite[Cor.~4.17]{FK06}. 
The \emph{comparison principle} can be proved as in \cite[Example~6.11]{FK06}, modifying the definition of $\Phi_n$ with an additional time-dependent term
\[
\Phi_{n}(x,y,t,\tau)=\overline{f}(x,t)-\udf(y,\tau)-n\frac{(x-y)^2}{1+(x-y)^2}-n(t-\tau)^2,
\]
and then the proof, mutatis mutandis, follows similarly. 

Then the large-deviations principle holds in $D_{\R \times [0,T]}([0,T])$ with rate functional
\[
J_Q(u):=\int_0^T L(x,\dot{x},t,\dot{t}) \, ds,
\]
where $L$ is the Legendre transform of $H$ respect to the variables $(\nabla f,\partial_t f)$. It is just a calculation to check that
\[
L(x,\dot{x},t,\dot{t})=\frac{\dot{x}^2(s)}{2} + \nabla E (x(s),t(s)) \dot{x}(s) + \frac{\nabla E^2(x(s),t(s))}{2} + \mathbb{I}_{1}(\dot{t}(s)),
\]
where $\mathbb I_1$ is the indicator function of the set $\{1\}$, i.e.
\[
\mathbb I_{1}(\dot{t}(s))= \begin{cases}
0 & \dot{t}(s) = 1, \\
+\infty & \text{otherwise}.
\end{cases}
\] 
It is then clear that $J_Q(u) = \J_Q(x)$.

\medskip

The large-deviations result for $Y^h$ can be found in \cite[Th.~1.1 of Ch.~4]{FW12} in the case of a time-indedendent energy. The time-dependent case follows by the same modification as above.

\end{proof}

\begin{theorem}[Convergence to Brownian motion with gradient drift]
\label{thm:convprocesses}
Let be given an energy $E$ satisfying condition \eqref{cond:energy}, with $\nabla E$ uniformly continuous, let $\alpha_n \beta_n \to \omega$, $n \beta_n \to 1/h$, and let $\mu^n_t$ be the law of the process $\{ X^n(t) \}$ defined in \eqref{def:generator} with $\mu^n_0 = \delta_{X^n_0}$. If $X^n_0 \to x_0$, then $\mu_n$ weakly converge to $\mu$ (in the duality with $C_b(\R)$), where $\mu$ is the law of the Brownian motion with gradient drift \eqref{def:BMgraddrif} with $\mu_0=\delta_{x_0}$. 


\end{theorem}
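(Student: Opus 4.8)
The plan is to prove the \emph{stronger} statement that the processes $X^n$ converge in distribution to $Y^h$ in $D([0,T])$, from which the claimed convergence of the marginals $\mu^n_t$ in duality with $C_b(\R)$ is immediate. I would follow the classical route of generator convergence on a core together with tightness and well-posedness of the limiting martingale problem (as in \cite[Ch.~4]{EthierKurtz86}). As in the proof of Theorem~\ref{thm:lardevres}, the first move is to remove the time dependence by passing to the space-time process: set $u=(x,t)$ and $Q_nf(u)=\Omega_nf(x,t)+\partial_tf(x,t)$ for $f\in C^2_c(\R\times[0,T])$, with intended limit $Qf(u)=-2\omega\nabla E(u)\nabla f(u)+\omega h\,\partial_{xx}f(u)+\partial_tf(u)$, the generator of the space-time process $(Y^h_t,t)$ associated with \eqref{def:BMgraddrif}.

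The second step is the generator expansion. Taylor-expanding $f(x\pm1/n)-f(x)=\pm\tfrac1n\nabla f(x)+\tfrac1{2n^2}\partial_{xx}f(x)+O(n^{-3})$ in \eqref{def:generator} gives
\[
\Omega_nf(x)=-2\alpha_n\sinh(\beta_n\nabla E)\,\nabla f(x)+\frac{\alpha_n}{n}\cosh(\beta_n\nabla E)\,\partial_{xx}f(x)+O(\alpha_n n^{-2}).
\]
Since $n\beta_n\to1/h$ forces $\beta_n\to0$ and hence $\alpha_n\sim\omega nh$, we have $-2\alpha_n\sinh(\beta_n\nabla E)=-2(\alpha_n\beta_n)\nabla E+O(\alpha_n\beta_n^3)\to-2\omega\nabla E$, while $\tfrac{\alpha_n}{n}\cosh(\beta_n\nabla E)=\tfrac{\alpha_n\beta_n}{n\beta_n}\cosh(\beta_n\nabla E)\to\omega h$ and the remainder $\alpha_nn^{-2}\sim\omega h/n\to0$. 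Because $f$ has compact support and $\nabla E$ is bounded and uniformly continuous, these limits are uniform in $u$, so $\|Q_nf-Qf\|_\infty\to0$ for every $f$ in the core $C^2_c(\R\times[0,T])$.

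The third step is to supply the remaining hypotheses of an abstract convergence theorem (e.g.\ \cite[Cor.~4.8.9, Thm.~4.2.6]{EthierKurtz86}, or a Trotter--Kato semigroup argument). Two items are needed. \emph{Well-posedness of the limit}: the SDE \eqref{def:BMgraddrif} has a non-degenerate constant diffusion coefficient and a bounded continuous drift, hence a unique weak solution, so the martingale problem for $Q$ is well posed and $Q$ generates a Feller semigroup, which identifies the limit with the law $\mu$ of $Y^h$. \emph{Tightness of $\{X^n\}$ in $D([0,T])$}: from $\Omega_n(\mathrm{id})(x)=r^+-r^-=-2\alpha_n\sinh(\beta_n\nabla E)$ and the carr\'e-du-champ $\Omega_n(\mathrm{id}^2)(x)-2x\,\Omega_n(\mathrm{id})(x)=(r^++r^-)/n=2\alpha_n\cosh(\beta_n\nabla E)/n$, both bounded uniformly in $n$ and $x$ for $n$ large, one obtains uniform bounds $\E|X^n_t|^2\le C(1+t)$ and, via the associated Dynkin martingales together with $X^n_0\to x_0$, the Aldous--Rebolledo criterion. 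Combining uniform generator convergence on the core, tightness, and uniqueness of the limiting martingale problem yields $X^n\Rightarrow Y^h$ in $D([0,T])$; in particular $\mu^n_t\rightharpoonup\mu_t$ in duality with $C_b(\R)$, and indeed $t\mapsto\mu^n_t$ converges to $t\mapsto\mu_t$ in $C([0,T];\mathcal P(\R))$.

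All the computations here are elementary Taylor estimates; the \emph{main obstacle} is purely organizational and concerns the non-compactness of the state space $\R$. One must control the drift and the diffusivity of $X^n$ uniformly in $n$ so that no mass escapes to infinity (hence the second-moment bound and the tightness step), and one must cite the precise form of the limit theorem whose hypotheses are exactly uniform convergence of $Q_n$ on a core plus well-posedness of the limiting martingale problem. The uniform continuity of $\nabla E$ assumed in the statement enters precisely at these points, ensuring both that $Q$ is a Feller generator and that the convergence $Q_nf\to Qf$ is uniform rather than merely pointwise.
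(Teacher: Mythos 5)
Your proposal is correct, and it takes a genuinely different route from the paper's. The paper proves the theorem via Trotter's semigroup convergence theorem: it works on the state space $C_b(\overline\R)$ (one-point compactification), uses the core $D=\{f\in C^2_b(\R)\cap C(\overline\R): \Delta f \text{ uniformly continuous}\}$, shows pointwise convergence $\Omega_nf\to\Omega f$ on $D$, and then establishes that $\lambda-\Omega$ has dense range by solving the elliptic equation $-\omega h\Delta f+2\omega\nabla E\nabla f+\lambda f=g$ via Lax--Milgram plus bootstrap regularity; this gives convergence of the dual (marginal) semigroups, which is exactly the statement. You instead invoke the martingale-problem machinery of Ethier--Kurtz: space-time augmentation $Q_n$ to absorb the time dependence, the same Taylor expansion of the generator (your computations $-2\alpha_n\sinh(\beta_n\nabla E)\to-2\omega\nabla E$, $\tfrac{\alpha_n}{n}\cosh(\beta_n\nabla E)\to\omega h$ are correct, using $\beta_n\to0$, $\alpha_n\sim\omega nh$), tightness via moment bounds and Aldous--Rebolledo, and uniqueness of the limiting martingale problem (bounded continuous drift, non-degenerate constant diffusion). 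Your approach proves the stronger statement $X^n\Rightarrow Y^h$ in $D([0,T])$, from which the marginal convergence claimed in the theorem is immediate, and it handles time-dependence explicitly rather than only sketching the autonomous case as the paper does. What the paper's semigroup route buys in exchange is that it avoids probabilistic tightness arguments entirely, replacing them with the resolvent density condition of Trotter's theorem, which is verified by purely elliptic-PDE means; this is somewhat shorter if one is content with marginal convergence. One small technical point to tidy in your write-up: $\mathrm{id}$ and $\mathrm{id}^2$ are not in $C_b(\R)$, so the carr\'e-du-champ computation giving $(r^++r^-)/n$ should be localized (truncate and use that $|\nabla E|\le R$ keeps the rates controlled), but this is routine.
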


\begin{proof}
This is a result of standard type, and we give a brief sketch of the proof for the case of time-independent $E$, using the semigroup convergence theorem of Trotter~\cite[Th.~5.2]{Tro58}. The assumptions of this theorem are satisfied by the existence of a single dense set on which $\Omega$ and $\Omega_n$ are defined, pointwise convergence of $\Omega_n$ to $\Omega$ on that dense set, and a dense range of $\lambda-\Omega$ for sufficiently large~$\lambda$. The assertion of Trotter's theorem is pointwise convergence of the corresponding semigroups at each fixed $t$, which implies convergence of the dual semigroups in the dual topology, which is the statement of Theorem~\ref{thm:convprocesses}.

We set the system up as follows. Define the state space $Y:=C_b(\overline \R)$ with the uniform norm, where $\overline\R$ is the one-point compactification of $\R$; define the core $D := \{f\in C^2_b(\R)\cap C(\overline \R): \Delta f \text{ uniformly continuous }\}$, which is dense in $Y$ for the uniform topology, and which will serve as the dense set of definition mentioned above for both $\Omega_n$ and $\Omega$. For each $f\in D$, $\Omega_nf\to\Omega f$ in the uniform topology. 

The density of the range of $\lambda - \Omega$ is the solvability in $D$ of the equation
\[
-\omega h\Delta f + 2\omega \nabla E\nabla f + \lambda f= g, \qquad \text{in }\R,
\]
for all $g$ in a dense subset of $Y$; we choose $g\in C_c(\R)+ \R$. This is a standard result from PDE theory, which can be proved for instance as follows. First note that we can assume $g\in C_c(\R)$, by adding a constant to both $g$ and $f$. Secondly, for sufficiently large $\lambda>0$ the left-hand side generates a coercive bilinear form in $H^1(\R)$ in the sense of the Lax-Milgram lemma, and therefore there exists a unique solution $f\in H^1(\R)$. By bootstrap arguments, using the continuity and boundedness of $\nabla E$, we find $f\in C^2_b(\R)$, and   since $f\in H^1(\R)\cap C^2_b(\R)$, $f(x)$ tends to zero at $\pm\infty$, implying that $f\in C_b^2(\R)\cap C(\overline\R)$. Finally, since $\nabla E$ is uniformly continuous, the same holds for $\Delta f$. This concludes the proof.
\end{proof}

%
%
%
%
%


\section{Rate-independent Limit}
\label{sec:rilimit}

In this section we prove point A2 of Statement~\ref{statement:Connection} and the whole of Statement~\ref{statement:Commutation}. We will prove point A2 with a theorem that holds in greater generality, without assuming the explicit form of the couple $\psi_{\beta}, \psi^*_{\beta}$, but only a few `reasonable' assumptions and the limiting behaviour. 

We are therefore in the regime where $\beta \to \infty$, $\log \alpha = -\beta A$ for some $A > 0$.

\subsection{Functions of bounded variation and rate-independent systems}
We now briefly recall the definition of the BV space of functions with bounded variation, following the notation of \cite{MRS12}. A full description of this space and its properties can be found in \cite{AFP00}. Given a function $x:[0,T] \to \R$ the total variation of $x$ in the interval $[0,T]$ is defined by
\begin{equation*}
\var (x,[0,T]):=\sup \left\{ \sum_{j=1}^n|x(t_j)-x(t_{j-1})|: 0=t_0 < t_1 < \cdots < t_{n-1} < t_n = T\right\}.
\end{equation*}
We say that $x \in BV([0,T])$ if $\var(x,[0,T]) < \infty$. The function $x$ then admits left and right limits $x(t_-)$ and $x(t_+)$ in every point $t \in [0,T]$, and we define the jump set of $x$ as
\[
J_x:=\bigg\{ t \in [0,T] : x(t_-)\neq x(t) \text{ or } x(t) \neq x(t_+) \bigg\},
\]
and the pointwise variation in the jump set as
\begin{equation}
\label{eq:normaljumpterm}
\pmuj(x,[0,T]):= \sum_{t \in J_x}  \Bigl(|x(t_-) - x(t)| + |x(t) - x(t_+)|\Bigr).
\end{equation}

The total variation admits the representation
\[
\var (x,[0,T])=\int_0^T |\dot{x}(t)|dt + \int_0^T d|Cx| + \pmuj(x,[0,T]),
\]
where $|\dot{x}|$ is the modulus of the absolutely continuous (a.c.) part of the distributional derivative of $x$; the measure $|Cx|$ is the Cantor part and $\pmuj$ represents the contribution of the (at most countable) jumps.

Given a sequence $\{ x_n \} \subset BV([0,T])$, we again define two notions of convergence. We say that $x_n$ \emph{weakly} converges to $x$ ($x_n \rightharpoonup x$) if $x_n(t)$ converges to $x(t)$ for every $t \in [0,T]$ and the variation is uniformly bounded, i.e. $\sup_n \var(x_n,[0,T]) < \infty$. We say that $x_n$ \emph{strictly} converges to $x$ ($x_n \to x$) if $x_n \rightharpoonup x$ and in addition $\var(x_n,[0,T])$ converges to $\var(x,[0,T])$ as $n \to \infty$.

According to the general setup of \cite{MielkeRossiSavare12a,MRS12} we define a notion of rate-independent system based on an energy balance similar to equation~\eqref{eq:gen_ene_equ}, where now the dissipation~$\psi$ has a linear growth, i.e. $\psi(\eta)=\psi_{RI}(\eta)=A|\eta|$ with $A > 0$.

We first define $\pmuj_E$, which can be viewed as an energy-weighted jump term, as 
\[
\pmuj_E(x,[0,T])=\sum_{t \in J_x}\Bigl[\Delta(x(t_-),x(t))+\Delta(x(t_+),x(t))\Bigr],
\]
where
\begin{equation}
\label{eq:jumpterm}
\Delta(x_0,x_1) = \inf \left\{ \int_0^1 \Big(|\nabla E(t,\theta(\tau))| \vee A \Big) |\dot{\theta}(\tau)|\,d\tau \; : \; \theta \in AC([0,1]), \; \theta(i)=x_i, \; i=0,1 \right\}.
\end{equation}
The relation between the definitions \eqref{eq:jumpterm} and \eqref{eq:normaljumpterm} becomes clear in the following inequality
\[
\pmuj_E(x,[0,T]) \geq A\;\pmuj(x,[0,T]),
\] 
where equality can be achieved depending on the behaviour of $E$, e.g. trivially when $|\nabla E(x,t)| \leq A$ for every $x,t$. Then we can interpret $\pmuj_E$ as a modified jump term, with an $E$-dependent weight. 

In analogy with the (generalized) gradient flow Definition~\ref{def:gengraflo} we define rate-independent systems. There is no unique way to define a rate-independent system. The so-called \emph{energetic solutions} have been introduced and analysed in \cite{MielkeTheil04,MielkeTheil99,MielkeTheilLevitas02}, and are based on the combination of a pointwise global minimality property and an energy balance. Here we concentrate on \emph{BV solutions}, as defined in \cite{MielkeRossiSavare12a}. Our limiting system will be of this type. 

Fix $A > 0$, the rate-independent dissipation $\psi_{RI}$ and its Legendre transform $\psi^*_{RI}$ are
\begin{equation}
\label{def:psiRIpsistar}
\psi_{RI}(v)=A|v|, \qquad \psi^*_{RI}(w)=\begin{cases}
0 & w \in [-A,A], \\
+ \infty & \text{otherwise}.
\end{cases}
\end{equation}

\begin{definition}[Rate-independent evolution, in the BV sense]
\label{def:RIgraflo}
Given an energy $E:\R \times [0,T] \to \R$, continuously differentiable, and $A>0$, a curve $x \in BV([0,T])$ is a \emph{rate-independent gradient flow} of $E$ in $[0,T]$ if it satisfies the  energy balance
\begin{multline}
\label{eq:RI_ene_equ}
\int_0^T \bigg( \psi_{RI}(\dot{x}(t)) + \psi^*_{RI}(-\nabla E (x,t)) \bigg) dt + A\int_0^T d|Cx| + \pmuj_E(x,[0,T]) \\
+ E(x(T),T) - E(x(0),0) - \int_0^T \partial_t E(x,t) dt = 0,
\end{multline}
with $\psi_{RI}$ and $\psi^*_{RI}$ are defined in \eqref{def:psiRIpsistar}.
\end{definition}

\subsection{Assumptions and the main result}

In the rest of this section we prove that the generalized gradient-flow evolution converges to the rate-independent one. This is point A2 of statement~\ref{statement:Connection}, formulated in Theorem~\ref{thm:RIgammaconv} showing Mosco-convergence of $\J_{\beta}$ to $\J_{RI}$, which is the left-hand side of~\eqref{eq:RI_ene_equ}. There are three main reasons why the convergence to a rate independent system should be expected. 

First, from a heuristic mathematical point of view, our choices of $\alpha$ and $\beta$ yield pointwise convergence of $\psi_{\beta}$ and $\psi_{\beta}^*$ to a one-homogeneous function and to its dual, the indicator function; this suggests a rate-independent limit. However, this argument does not explain which of the several rate-independent interpretations the limit should satisfy, nor does it explain the additional jump term. 

Secondly, from a physical point of view, the underlying stochastic model mimics a rate-independent system. This can be recognized by keeping the lattice size finite but letting  $\beta \to \infty$; then the rates either explode or converge to zero, depending on the value of $\nabla E$. We can interpret this in the sense that when a rate is infinite, with probability one a jump  will occur to the nearest lattice point with zero jump rate. 

Thirdly, considering the evolution, in the case $1 \ll \beta < \infty$ the generalized gradient flow will present fast transitions when $|\nabla E| > A$. By slowing down time during these fast transitions, we can capture what is happening at the small time scale of these fast transitions---which become jumps in the limit. This is exactly how we construct the recovery sequence in Theorem~\ref{thm:RIgammaconv}.

\medskip

The convergence will be proven in a greater generality; more precisely, we do not use the explicit formulas, but we require that $\psi_{\beta}$ and $\psi_{\beta}^*$  satisfy the following conditions: 
\begin{itemize}
\item[A] $\psi_{\beta}$ and $\psi_{\beta}^*$ are both symmetric, convex and $C^1$; \\
\item[B] $\psi_{\beta}^*$ converge pointwise to  $\psi_{RI}^*(w)= \begin{cases}		
+\infty & \text{ for } |w|>A, \\
0 & \text{ for } |w| \leq A; \end{cases}$ \\
\item[C] $\forall\, M>0 \ \exists \, \delta_{\beta} \to 0$ such that  as $\beta \to \infty$, \begin{align*}
& K_{\beta}^{-1}:=\partial \psi_{\beta}^* (A+\delta_{\beta}) \to \infty,\\
&\sup_{|w| \leq R} \; \frac{\partial \psi_{\beta}^* (w +M K_{\beta} )}{ \partial \psi_{\beta}^* ( w \vee(A+\delta_\beta))}K_{\beta} \to 0, \qquad \text{and}\\
& \partial \psi_{\beta}^* ( A + M K_{\beta} )K_{\beta} \to 0.
\end{align*}
\item[D]  For each $\alpha\geq1$ and for each $|w|\leq R$ there exists $\eta_\beta(w,\alpha)\geq0$ such that 
\[
\partial\psi_\beta^*(w+\eta_\beta(w,\alpha)) = \alpha \partial\psi_\beta^*(w),
\]
and $\eta_\beta$ is bounded uniformly in $\alpha$, $\beta$, and $|w|\leq R$.
\end{itemize}

It is important to underline that the previous conditions C-D are needed in Theorem~\ref{thm:RIgammaconv} only for the the $\Gamma$-limsup, meanwhile they are not necessary for the $\Gamma$-liminf. 

These conditions are satisfied by a large family of couples $\psi_{\beta}$-$\psi_{\beta}^*$. The two examples below show that our specific case and the vanishing-viscosity approach respectively are covered by the assumptions A-D.

\subsubsection*{Dissipation~\eqref{def:dissipation}: $\psi_{\beta}^*(w) = \beta^{-1}e^{-\beta A} \cosh (\beta w)$}

Conditions A and B are trivially satisfied. Then, considering only $w \geq A$ for simplicity, we get
\[
\partial \psi_{\beta}^*(w) = e^{-\beta A} \sinh (\beta w) \simeq e^{\beta(w-A)}.
\]
With the choice $\delta_{\beta} = \beta^{-1}\log(\beta)$ and $\lambda =1$, it holds that
\[ 
\partial \psi_{\beta}^* ( A +  MK_{\beta} )K_{\beta} \leq e^{\beta  MK_{\beta}}K_{\beta} \to 0,
\]
because  $K_{\beta} = \beta^{-1}$. Then condition C is satisfied with
\begin{equation*}
\frac{\partial \psi_{\beta}^* ( w +  K_{\beta} )}{ \partial \psi_{\beta}^* ( w )}K_{\beta} \leq \frac{\exp(\beta(w+  K_{\beta} - A))+1}{\exp(-\beta(A-w))}K_{\beta} \leq (\exp(\beta  K_{\beta})+1)K_{\beta} \to 0.
\end{equation*}
Condition D is satisfied because for $ w \gg 1$ we have that $\sinh(\beta w) \simeq \frac{1}{2} e^{\beta w}$. Then condition D approximately reads as
\[
e^{\beta(w+\eta)} \simeq \alpha e^{\beta w},
\]
which is satisfied for $\eta \simeq \beta^{-1} \log \alpha $.

\subsubsection*{Vanishing viscosity: $\psi_{\beta}^*(w) = \beta(|w|-A)_+^2$}

Also here, conditions A and B are immediately satisfied. Then, again considering $w \geq A$, we verify condition C by choosing $\delta_{\beta} \simeq \beta^{-1/3} $ and $\lambda = 1$, so that 
\[
\partial \psi_{\beta}^* (w) = 2\beta(w-A) \; \implies \; K_{\beta}^{-1}= 2\beta \delta_{\beta} \to \infty.
\]
Then it is just a calculation to check that condition C is satisfied in this case. Now condition D requires that 
\[
2\beta(w + \eta-A)=2\alpha \beta(w-A),
\]
and so $\eta=(\alpha-1)(w-A)$ satisfies the condition.

\begin{theorem}[Convergence to the rate-independent evolution]
\label{thm:RIgammaconv}
Given an energy $E$ satisfying condition \eqref{cond:energy}, a sequence of couple $\psi_{\beta}$-$\psi^*_{\beta}$ satisfying conditions A-D, and for $x \in BV([0,T])$ consider the functional $\J_\beta$
\begin{equation}
\label{def:Jbeta}
\J_{\beta}(x)= \begin{cases}
\displaystyle \int_0^T \bigg( \psi_{\beta}(\dot{x}(t)) + \psi_{\beta}^*(-\nabla E(x(t),t)) + \dot{x}(t) \nabla E(x(t),t) \bigg) \,dt & \text{for } x \in AC(0,T),\\
+\infty & \text{otherwise},
\end{cases}
\end{equation}
then, as $\beta \to \infty$, $\J_{\beta} \xrightarrow{M} \J_{RI}$ with respect to the weak-strict topology of $BV$, where $\J_{RI}$ is given by
\begin{multline}
\label{def:JRI}
\J_{RI}(x):=\int_0^T \bigg( \psi_{RI}(\dot{x}(t)) + \psi^*_{RI}(-\nabla E (x(t),t)) \bigg) dt + A\int_0^T d|Cx| + \pmuj_E(x,[0,T]) \\
+ E(x(T),T) - E(x(0),0) - \int_0^T \partial_t E(x(t),t) dt.
\end{multline} 
Moreover, if the sequence $\{ x_{\beta} \}$ is such that $\J_{\beta}(x_{\beta})$ is bounded, $\{ x_{\beta}(0)\}$ is bounded, and
\begin{equation}
\label{eq:condbound}
\int_0^t \partial_t E(x_{\beta}(s),s)ds \leq C \qquad \forall \beta, \; t \in [0,T],
\end{equation}
then $\{ x_{\beta} \}$ is weakly compact in $BV([0,T])$.
\end{theorem}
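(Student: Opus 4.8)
The plan is to establish the three assertions separately: the $\Gamma$-$\liminf$ inequality, the $\Gamma$-$\limsup$ inequality (recovery sequence), and the weak compactness in $BV$. Throughout I write the integrand of $\J_\beta$ along a curve by using the chain rule: since $x_\beta\in AC(0,T)$,
\[
\J_\beta(x_\beta)=\int_0^T\!\Bigl(\psi_\beta(\dot x_\beta)+\psi_\beta^*(-\nabla E(x_\beta,t))\Bigr)\,dt+E(x_\beta(T),T)-E(x_\beta(0),0)-\int_0^T\partial_tE(x_\beta,t)\,dt,
\]
because $\int_0^T\dot x_\beta\nabla E(x_\beta,t)\,dt=\int_0^T\frac{d}{dt}E(x_\beta,t)\,dt-\int_0^T\partial_tE(x_\beta,t)\,dt$. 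This is the form that matches $\J_{RI}$, and it makes transparent that the only terms that can ``create mass'' in the limit are the dissipation terms $\psi_\beta(\dot x_\beta)$ and $\psi_\beta^*(-\nabla E)$, which must produce the absolutely continuous part, the Cantor part ($A\,d|Cx|$) and the jump term $\pmuj_E$.

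\emph{Compactness.} First I would prove the compactness statement, since it is needed to make sense of the weak-strict topology of $BV$ along sequences with bounded energy. From $\J_\beta(x_\beta)\le C$, the boundedness of $\{x_\beta(0)\}$, condition~\eqref{eq:condbound}, and $\psi_\beta^*\ge0$, one gets $\int_0^T\psi_\beta(\dot x_\beta)\,dt\le C$ together with a uniform sup bound on $E(x_\beta(t),t)$, hence (using $E\ge0$ and $|\nabla E|\le R$) a uniform bound on $\sup_t|x_\beta(t)|$. Because $\psi_\beta$ converges pointwise to $\psi_{RI}(v)=A|v|$ from above in a monotone fashion (this needs to be read off from conditions~A--B via Legendre duality: $\psi_\beta^*$ decreasing in $\beta$ to $\psi_{RI}^*$ forces $\psi_\beta$ increasing in $\beta$ to $\psi_{RI}$), we may bound below: for any fixed $\bar\beta$, $\int_0^T\psi_{\bar\beta}(\dot x_\beta)\,dt\le\int_0^T\psi_\beta(\dot x_\beta)\,dt\le C$ for all large $\beta$. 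Since $\psi_{\bar\beta}$ is superlinear, this would only give $W^{1,1}$ equi-integrability for a fixed viscous approximation, not a variation bound; instead I would use $\psi_\beta(v)\ge A|v|-c_\beta$ with $c_\beta\to0$ (again from the pointwise monotone convergence) to get $\int_0^T|\dot x_\beta|\,dt\le C$. Combined with the sup bound, this gives $\sup_\beta\var(x_\beta,[0,T])<\infty$, and together with pointwise precompactness (Helly's selection theorem, using the uniform variation and sup bounds), we extract a subsequence converging pointwise to some $x\in BV([0,T])$; that is exactly weak $BV$ convergence.

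\emph{$\Gamma$-$\liminf$.} Given $x_\beta\rightharpoonup x$ weakly in $BV$ with $\liminf\J_\beta(x_\beta)<\infty$, the compactness argument already yields the bounds above. Pointwise convergence $x_\beta(t)\to x(t)$ lets me pass to the limit in $E(x_\beta(0),0)$, $E(x_\beta(T),T)$ and (by dominated convergence, since $\partial_tE$ is bounded and $\nabla E$ Lipschitz in $t$) in $\int_0^T\partial_tE(x_\beta,t)\,dt$. For the dissipation terms I would use the standard lower-semicontinuity machinery for $1$-homogeneous functionals on $BV$: rewrite $\psi_\beta(\dot x_\beta)+\psi_\beta^*(-\nabla E(x_\beta,t))\ge \dot x_\beta\,\eta_\beta(t)$ for any measurable selection $\eta_\beta(t)$ with $|\eta_\beta(t)|\le A+o(1)$ and with $\eta_\beta$ chosen to nearly saturate the Young inequality — morally $\eta_\beta=\partial\psi_\beta(\dot x_\beta)$ — and then carefully track the contribution along the fast transitions (where $\dot x_\beta$ is large and $|\nabla E(x_\beta,t)|>A$), which in the limit concentrate on the jump set and must produce at least $\pmuj_E$. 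Concretely, for the jump part I would localize near each jump time, reparametrize the ``transition graph'' traced by $x_\beta$, and show by a lower bound on $\int(\psi_\beta+\psi_\beta^*)$ over the transition interval that the cost is at least $\Delta(x(t_-),x(t))+\Delta(x(t_+),x(t))$ up to $o(1)$, using that $\psi_\beta(v)\ge\bigl(|\nabla E|\vee A\bigr)|v|-R_\beta$ with a remainder controlled via Legendre duality and condition~B. For the Cantor part one applies Reshetnyak-type lower semicontinuity with the integrand $|v|\mapsto A|v|$. This $\Gamma$-$\liminf$ step is where conditions A and B suffice and C--D are not needed.

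\emph{$\Gamma$-$\limsup$ (recovery sequence).} Fix $x\in BV([0,T])$ with $\J_{RI}(x)<\infty$; so $|\nabla E(x(t),t)|\le A$ for a.e.\ $t$ outside jumps, the a.c.\ and Cantor parts contribute $\int A|\dot x|+A\,d|Cx|$, and each jump contributes $\Delta(x(t_-),x(t))+\Delta(x(t_+),x(t))$. I would build $x_\beta$ by (i) keeping $x$ essentially unchanged on the ``slow'' part, where the trivial sequence works because $\psi_\beta(\dot x)\to A|\dot x|$ pointwise and $\psi_\beta^*(-\nabla E)\to0$ when $|\nabla E|\le A$, handled by dominated convergence after a truncation; (ii) regularizing the Cantor part by mollification so that it is absorbed into the a.c.\ part at cost $A\,d|Cx|+o(1)$; and (iii) at each jump time $t$, inserting a fast but smooth transition on a tiny time interval of length $\sim K_\beta$ that follows (an optimal $AC$ path $\theta$ realizing) $\Delta$ — going from $x(t_-)$ to $x(t)$ to $x(t_+)$ — traversed at speed $\sim K_\beta^{-1}$. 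The point of conditions C and D is exactly to make this transition cheap: along it, $\dot x_\beta\sim K_\beta^{-1}$, so $\psi_\beta(\dot x_\beta)\sim K_\beta^{-1}\partial\psi_\beta(K_\beta^{-1})$; condition~C (with $K_\beta^{-1}=\partial\psi_\beta^*(A+\delta_\beta)$, so $\partial\psi_\beta(K_\beta^{-1})=A+\delta_\beta$) gives that the extra cost beyond $(|\nabla E|\vee A)|\dot x_\beta|$ is $o(1)$ after integrating over the interval of length $\sim K_\beta$; condition~D is used to re-weight the speed so that the integrand matches $(|\nabla E(\theta)|\vee A)|\dot\theta|$ rather than just $A|\dot\theta|$ when the path passes through a region where $|\nabla E|>A$, i.e.\ to recover $\Delta$ and not merely $A\times$(jump size). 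Summing the finitely-many-at-a-time jumps (handle the countably many by a diagonal/truncation argument keeping only jumps above size $\varepsilon$) and letting $\varepsilon\to0$ gives $\limsup\J_\beta(x_\beta)\le\J_{RI}(x)$.

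\emph{Main obstacle.} The genuinely delicate point is the jump contribution in both inequalities: matching, up to $o(1)$, the cost of a fast viscous transition with the geodesic-type quantity $\Delta(x_0,x_1)$ defined in~\eqref{eq:jumpterm}. For the $\limsup$ this is an explicit construction whose bookkeeping is controlled precisely by conditions C--D; for the $\liminf$ one must show that \emph{every} way of crossing from $x_0$ to $x_1$ through a region with possibly $|\nabla E|>A$ costs at least $\Delta(x_0,x_1)$, which requires the reparametrization-invariant lower bound $\psi_\beta(v)+\psi_\beta^*(w)\ge\bigl(|w|\vee A-r_\beta\bigr)|v|$ together with a careful localization argument around each jump time to separate the jump cost from the a.c.\ and Cantor costs without double counting. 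Getting the error terms $\delta_\beta$, $K_\beta$, and the interval lengths to balance simultaneously in the $\limsup$, and making the localization rigorous in the $\liminf$, is where the real work lies.
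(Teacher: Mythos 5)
Your overall architecture matches the paper's: compactness from the Young-type lower bound $\psi_\beta(v) + \psi^*_\beta(A) \ge A|v|$ combined with $\psi^*_\beta(A)\to 0$; $\liminf$ via weak-$*$ convergence of the dissipation measures and a blow-up/reparametrization argument at the jumps to recover the full $\Delta$-weighted jump cost; $\limsup$ via fast transitions at time scale $K_\beta^{-1}$. Your observation that conditions C--D are only needed for the $\limsup$ is also the one the paper makes explicitly. There are, however, two places where you depart from the paper, and one small error worth flagging.

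For the $\limsup$, the paper does \emph{not} glue a piecewise recovery sequence (trivial on the a.c.\ part, mollified Cantor part, inserted jump transitions). Instead it takes a single global arc-length-type reparametrization $(\st,\sx):[0,S]\to[0,T]\times\R$ of the whole $BV$ curve (as in~\cite{MielkeRossiSavare12a}, Prop.~6.10), with the key identity \eqref{eq:equivalenceparametrization}, and then perturbs only the \emph{time} component by solving $\dot\st_\beta = \dot\st \vee \e_\beta$, setting $x_\beta(t)=\sx(\st_\beta^{-1}(t\lambda_\beta))$ with the global rescaling $\lambda_\beta = T_\beta/T$. This handles the a.c., Cantor and jump parts in one stroke: after reparametrization $\sx$ is Lipschitz, so the integrability issue behind your ``after a truncation'' never arises, and the Cantor part needs no mollification. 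The price is the three explicit error estimates \eqref{l-limsup-est:1}--\eqref{l-limsup-est:3}, but these replace all your separate gluing estimates. Your piecewise construction is workable in principle but strictly more bookkeeping.

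You also misidentify the role of condition~D. In the paper's construction, the reweighting by $|\nabla E|\vee A$ is already entirely built into the definition of $\e_\beta$ in \eqref{def:epsilon} together with the bipotential identity \eqref{eq:bipotential}; this uses $\delta_\beta$ from condition~C, not~D. Condition~D is invoked only inside the proof of estimate~\eqref{l-limsup-est:2}, where it yields the convex-duality inequality $\psi_\beta(\alpha\,\partial\psi^*_\beta(z)) \le \psi_\beta(\partial\psi^*_\beta(z)) + C(\alpha-1)\partial\psi^*_\beta(z)$, which is then applied with $\alpha=\lambda_\beta$ to control the effect of the final time rescaling on the $\psi_\beta$ term. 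Finally, one small error: you assert that $\psi_\beta\uparrow\psi_{RI}$ monotonically because ``$\psi^*_\beta$ decreasing forces $\psi_\beta$ increasing''. Neither monotonicity follows from conditions A--D, and the premise fails already for the vanishing-viscosity example $\psi^*_\beta(w)=\beta(|w|-A)_+^2$, which is \emph{increasing} in $\beta$ for $|w|>A$. What you actually need, and the paper uses, is only the Young inequality $\psi_\beta(v)+\psi^*_\beta(A)\ge A|v|$ together with $\psi^*_\beta(A)\to 0$ from condition~B, and for the a.c.\ part of the $\limsup$ the reparametrized construction again sidesteps the need for a dominating majorant.
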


The proof is divided into three main steps. We first prove the compactness and the lim-inf inequality; this will follow as in \cite[Th.~4.1,4.2]{MRS12}. We report them for completeness and we translate their proof because we can avoid some technicalities. To finish the proof we need to construct a recovery sequence. When minimizers with $\J_{RI}\equiv 0$ are considered, then the recovery sequence is easy to construct; we just need to take a sequence $x_{\beta}$ such that  $\J_{\beta}(x_{\beta}) = 0$ for every~$\beta$. But for the full Mosco-convergence, we need to find a way to construct a recovery sequence also for non-minimizers of $\J_{RI}$. This is the last part of the proof and it will be achieved using a parametrized-solution technique. 

\subsection{Proof of compactness and the lower bound}
\begin{proof}[Proof of compactness]
Recall that weak convergence in $BV$ is equivalent to pointwise convergence supplemented with a global bound on the total variation (e.g.~\cite[Prop.~3.13]{AFP00}).

First we show that $|x_{\beta}(t)-x_{\beta}(0)|$ is bounded uniformly in $t$ and $\beta$. We  observe that  
\begin{equation}
\label{ineq:psi-from-below}
\psi_{\beta}(v) + \psi^*_{\beta}(A) \geq Av ,
\end{equation}
and so we obtain
\begin{equation*}
A|x_{\beta}(t)-x_{\beta}(0)| \leq A \int_0^t |\dot{x}_{\beta}|\,ds \leq \int_0^t \psi_{\beta}(\dot{x}_{\beta})\,ds + t \sup_{\beta} \psi^*_{\beta}(A) \leq C  < + \infty,
\end{equation*}
where the constant $C$ may change from line to line. Then
\begin{equation*}
|x_{\beta}(t)-x_{\beta}(0)| \leq C \; \text{ for every $\beta$ and every }t \in [0,T]. 
\end{equation*}
The inequality above and the boundedness of $x_{\beta}(0)$ imply that the whole sequence is bounded for every $t \in [0,T]$. 

Next we show the existence of a converging subsequence. 

For every $0 \leq t_0 \leq t_1 \leq T$ we recall the bound
\begin{equation*}
A|x_{\beta}(t_1)-x_{\beta}(t_0)| \leq \int_{t_0}^{t_1}A|\dot{x}_{\beta}| \, dt \leq \int_{t_0}^{t_1} \left( \psi_{\beta}(\dot{x}_{\beta})+\psi_{\beta}^*(A) \right)dt.
\end{equation*}
Defining the non-negative finite measures on $[0,T]$
\begin{equation*}
\nu_{\beta,A}:= \left( \psi_{\beta}(\dot{x}_{\beta})+\psi_{\beta}^*(A) \right)\L^1, 
\end{equation*}
up to extracting a suitable subsequence, we can suppose that they weakly-$*$ converge to a finite measure $\nu_A$, so that
\begin{equation*}
A\limsup_{\beta \to \infty}|x_{\beta}(t_0)-x_{\beta}(t_1)| \leq \limsup_{\beta \to \infty} \nu_{\beta,A}([t_0,t_1])\leq \nu_A([t_0,t_1]).
\end{equation*}
Defining the jump set $J:= \{ t \in [0,T] : \nu_a( \{ t \} ) > 0 \}$ and considering a countable set $I \supset J$ that is dense in $[0,T]$, we can find a subsequence $\beta_h$ such that  $ x_{\beta_h} \stackrel{p.w.}{\longrightarrow} x$ for every $t \in I$ as $\beta_h \to \infty$. From now on, for simplicity, we will number the subsequence with the same index of the main sequence. Then
\begin{equation}
\label{eq:inequalityboundBV}
A|x(t_0)-x(t_1)| \leq \nu_A([t_0,t_1]), \qquad \text{ for every } t_0,t_1 \in I.
\end{equation}
The curve $I \ni t \mapsto x(t)$ can be uniquely extended to a continuous curve in $[0,T]\setminus J$, that we will still denote by $x$. Arguing by contradiction we show that the whole $x_{\beta}(t)$ converges pointwise to $x(t)$. If the pointwise convergence does not hold, then there will be a further subsequence $t_{\beta_n} \to t \in [0,T]\setminus J$ such that $x_{\beta_n}(t_{\beta_n}) \to \tilde{x} \neq x(t)$, but this is in contradiction to the previous inequality
\begin{equation*}
A|x(t)-\tilde{x}| \leq \liminf_{\beta_n \to \infty}A|x_{\beta_n}(t)-x_{\beta_n}(t_{\beta_n})| \leq \limsup_{\beta_n \to \infty} \nu_{\beta_n,A}([t,t_{\beta_n}])=\nu_A(\{ t \}) = 0.
\end{equation*}
We have so proven the pointwise convergence of $x_{\beta}$ to $x$; the inequality~\eqref{eq:inequalityboundBV} then gives a uniform bound on the BV norm of $x_{\beta}$ and so we conclude.
\end{proof}

\begin{proof}[Proof of the lim-inf inequality]
Let $\{ x_{\beta} \} \subset AC(0,T)$ be a sequence such that $ \J_{\beta}(x_{\beta})$ is bounded, and which converges weakly to $x \in BV([0,T])$. By the arguments above $x_{\beta}$ is bounded uniformly in $t$ and $\beta$ and every term in the functional is bounded itself by a constant independent of $\beta$. The following limits, follow from the pointwise convergence and  Lebesgue's dominated convergence theorem:
\begin{equation*}
 E(x_{\beta}(\cdot),\cdot) \to E(x(\cdot),\cdot), \qquad \int_0^T \partial_t E(x_{\beta}(t),t)\,dt \to \int_0^T \partial_t E(x(t),t)\,dt.
\end{equation*}
As we said, the integral $\int_0^T \psi_{\beta}^*(-\nabla E(x_{\beta}(t),t))dt$  is bounded for every $\beta$ by a constant that we  still denote by $C$. Because of the monotonicity of $\psi_{\beta}^*$ this bound implies that
\begin{equation*}
\psi_{\beta}^*(a)\L^1\{t\in(0,T): |\nabla E(x_{\beta}(t),t)| \geq a \} \leq K \qquad \forall a \geq 0,
\end{equation*}
and since $\psi^*_{\beta}(w)\to +\infty \text{ for } |w| > A$,  we obtain 
\begin{equation*}
\lim_{\beta \to \infty} \L^1\{t\in(0,T): |\nabla E(x_{\beta}(t),t)| \geq a \} = 0 \qquad \forall a > A.
\end{equation*}
This proves that $|\nabla E(x(t),t)| \leq A$ a.e.\ and therefore $\int_0^T \psi_{RI}^*(\nabla E(x(t),t))\, dt=0$; it trivially follows that
\begin{equation*}
\liminf_{\beta \to \infty} \int_0^T \psi_{\beta}^*(\nabla E(x_{\beta}(t),t)) \, dt \geq \int_0^T \psi_{RI}^*(\nabla E(x(t),t)) \, dt.
\end{equation*}

We now prove the second part of the inequality,
\begin{equation*}
\liminf_{\beta \to \infty} \int_0^T \left(  \psi_{\beta}(\dot{x}) + \psi^*_{\beta}(\nabla E)\right)dt \geq \int_0^T A|\dot{x}| \, dt + A\int_0^T d|Cx| + \pmuj_E (x,[0,T]).
\end{equation*}
As in the proof of the compactness, we consider the non-negative finite measure on $[0,T]$
\begin{equation*}
\nu_{\beta} := \left( \psi_{\beta}(\dot{x}_{\beta}) + \psi_{\beta}^*(\nabla E(x_{\beta},\cdot)) \right)\L^1 ,
\end{equation*}
up to extracting a subsequence, we can suppose that they weakly$^*$ converge to a finite measure
\[
\nu_0 + \psi_{RI}^*(\nabla E( x,\cdot))\L^1.
\]
Because $|\nabla E(x,\cdot)| \leq A$ $\L$-a.e.\ we obtain that, as in the proof of the compactness,
\begin{equation*}
\nu_0 \geq A(|\dot{x}|+|Cx|+|Jx|) =  \psi_{RI}(\dot{x}) + A(|Cx|+|Jx|) .
\end{equation*}

This inequality is slightly too weak for us. The Cantor part and the Lebesgue measurable part are fine, but we need a stronger characterization of the jump part: for all $t \in J_x$,
\begin{equation*}
\nu_0(\{t\}) \geq \Delta(x(t_-),x(t))+\Delta(x(t),x(t_+)).
\end{equation*}
To prove this, fix $t\in J_x$ and take two sequences $h_{\beta}^- < t < h_{\beta}^+$ converging monotonically to $t$ such that
\begin{equation*}
x_{\beta}(h_{\beta}^-) \to x(t_-), \qquad x_{\beta}(h_{\beta}^+) \to x(t_+),
\end{equation*}
and  define 
\begin{equation*}
s_{\beta}(h):=h+\int_t^h \left( \psi_{\beta}(\dot{x}_{\beta}(\tau))+\psi_{\beta}^*(\nabla E( x_{\beta}(\tau),\tau))  \right)d\tau; \qquad s_{\beta}^{\pm}:=s_{\beta}(h^{\pm}_{\beta}).
\end{equation*}
Because of the convergence of $\nu_{\beta}$ we have
\begin{equation*}
\limsup_{\beta \to \infty}(s_{\beta}^+ - s_{\beta}^-) \leq \limsup_{\beta \to \infty} \nu_{\beta}([s_{\beta}^-,s_{\beta}^+]) \leq \nu_0(\{ t \}),
\end{equation*}
and up to extracting a subsequence we can assume that $s_{\beta}^{\pm} \to s^{\pm}$. Denote by $h_{\beta}:=s_{\beta}^{-1}$ the inverse map of $s_{\beta}$, we observe that $h_{\beta}$ is 1-Lipschitz and  monotone, and it maps $[s_{\beta}^-,s_{\beta}^+]$ onto $[h_{\beta}^-,h_{\beta}^+]$. We can then define the following Lipschitz functions
\begin{equation*}
\theta_{\beta}(s):= \begin{cases}
x_{\beta}(h_{\beta}(s)) & \text{if } s \in [s_{\beta}^-,s_{\beta}^+], \\
x_{\beta}(h_{\beta}^+) & \text{if }  s \geq s_{\beta}^+, \\
x_{\beta}(h_{\beta}^-) & \text{if }  s \leq s_{\beta}^-. \\
\end{cases}
\end{equation*}
The functions $\theta_{\beta}$ are uniformly Lipschitz, since (writing $\tau= h_\beta(s)$)
\[
|\dot \theta_\beta(s)| = |\dot x_\beta(h_\beta(s)) |\,|\dot h_\beta(s)|
\leq \frac{|\dot x_\beta(\tau) |}{1+ \psi_{\beta}(\dot{x}_{\beta}(\tau))+\psi_{\beta}^*(\nabla E( x_{\beta}(\tau),\tau))} 
\stackrel{\eqref{ineq:psi-from-below}}\leq \frac{\max\{c,1\}}A,
\]
and they take the special values 
\[
\theta_{\beta}(s_{\beta}^{\pm})=x_{\beta}(h_{\beta}^{\pm}), \qquad \theta_{\beta}(t)=x_{\beta}(t).
\]
Therefore, denoting by $I$ a compact interval containing the intervals $[s_{\beta}^-,s_{\beta}^+]$ for all $\beta$, then up to a subsequence, we have that 
\[ 
\theta_{\beta}(s) \to \theta(s), \qquad |\dot{\theta}_{\beta}| \stackrel{*}{\rightharpoonup }m \text{ in }L^{\infty}(I) \quad \text{with } m\geq |\dot{\theta}|.
\]
Moreover $\theta(s^{\pm})=x(t_{\pm})$ and $\theta(t)=x(t)$. Then using the inequality 
\begin{equation*}
\psi_{\beta}(v)+\psi_{\beta}^*(w) \geq (|w| \vee A)v - \psi_{\beta}^*(A),
\end{equation*}
we obtain
\begin{equation*}
\begin{split}
\nu_0(\{t\}) & \geq \limsup_{\beta \to \infty} \int_{h_{\beta}^-}^{h_{\beta}^+} \Big(  \psi_{\beta}(\dot{x}_{\beta}(\tau))+\psi_{\beta}^*(\nabla E(x_{\beta}(\tau),\tau))\Big)\,d\tau \\
& \geq \liminf_{\beta \to \infty} \int_{h_{\beta}^-}^{h_{\beta}^+} \Big(  (|\nabla E(x_{\beta}(\tau),\tau)| \vee A)|\dot{x}_{\beta}|(\tau) - \psi_{\beta}^*(A) \Big) \,d\tau \\
& \geq \liminf_{\beta \to \infty} \int_{s_{\beta}^-}^{s_{\beta}^+}  (|\nabla E(\theta_{\beta}(s),h_{\beta}(s))| \vee A)|\dot{\theta}_{\beta}|(s) \, ds - (h_{\beta}^+ -h_{\beta}^-)\psi_{\beta}^*(A).
\end{split}
\end{equation*}
The last term $(h_{\beta}^+ -h_{\beta}^-)\psi_{\beta}^*(A)$ tends to zero as $\beta\to\infty$. Therefore
\begin{equation*}
\begin{split}
\nu_0( \{ t \} ) & \geq \liminf_{\beta \to \infty}\int_I  (|\nabla E(\theta_{\beta}(s),h_{\beta}(s))| \vee A)|\theta_{\beta}|(s)\,  ds \\
& \stackrel{(*)}\geq \int_I   (|\nabla E(\theta(s),s)| \vee A)m(s) \, ds \geq  \int_{s^-}^{s^+} (|\nabla E(\theta(s),s)| \vee A)|\dot{\theta}|(s) \, ds \\
& \geq  \Delta(x(t_-),x(t))+\Delta(x(t),x(t_+)).
\end{split}
\end{equation*}
The inequality~$(*)$ follows from the technical Lemma~\cite[Lemma~4.3]{MRS12}, and so we conclude. \end{proof}

\subsection{Proof of the lim-sup inequality}

\begin{proof}
We assume that we are given ${x} \in BV([0,T])$; we will construct a sequence $x_\beta$ such that $\J_{\beta} (x_\beta) \to \J_{RI}( x)$.

\medskip
\textit{Reparametrization.}
A central tool in this construction is a reparametrization of the curve $ x$ (as in Figure~\ref{fig:timeparametrization}), in terms of a new time-like parameter $s$ on a domain $[0,S]$. The aim is to expand the jumps in $x$ into smooth connections. 

As in~\cite[Prop.~6.10]{MielkeRossiSavare12a}, we define
\[
\ss(t):= t + \int_0^t \left( \psi_{RI}(\dot{{x}}) + \psi_{RI}^*(\nabla E({x},\tau)) \right)d\tau + A\int_0^t d|C{x}| + \pmuj_E({x},[0,t]),
\] 
then there exists a Lipschitz parametrization $(\st,\sx):[0,S]\to[0,T]\times \R \;$ such that $\st$ is non-decreasing,
\begin{equation}
\label{eq:inverserelation}
\st(\ss(t))= t, \qquad \text{and} \qquad 
\sx(\ss(t))={x}(t) \text{ for every } t \in [0,T],
\end{equation}
and such that
\begin{equation}
\label{eq:equivalenceparametrization}
\int_0^S \mathbb{L} (\sx,\st,\dot{\sx},\dot{\st})\,ds = \int_0^T \left( \psi_{RI}(\dot{{x}}) + \psi_{RI}^*(\nabla E({x},\tau)) \right)d\tau + A\int_0^T d|C{x}| + \pmuj_E({x},[0,T]),
\end{equation} 
where 
\[ 
\mathbb{L}(\sx,\st,\dot{\sx},\dot{\st}) =\begin{cases}
A|\dot{\sx}|+\psi_{RI}^*(|\nabla E(\sx,\st)|) & \text{if } \dot{\st} > 0, \\
|\dot{\sx}|\bigl(A \vee |\nabla E(\sx,\st)|\bigr) & \text{if } \dot{\st} = 0.
\end{cases}
\]
Moreover, it also holds that
\begin{equation}
\label{eq:equivalenceVar}
\var(\sx,[0,S]) = \var(x,[0,T]).
\end{equation}
Note that $\mathbb L(\sx,\st,\dot{\sx},\dot{\st}) \geq |\dot{\sx}|\bigl(A \vee |\nabla E(\sx,\st)|\bigr)$, since $\psi_{RI}^*(w)$ is only finite when $|w|\leq A$.

\medskip
\textit{Preliminary remarks.}
The third term in $\J_\beta(x_\beta)$ (see~\eqref{def:Jbeta}) is equal to
\[
E( x_\beta (T),T) - E(x_\beta(0), 0) - \int_0^T \partial_t E( x_\beta(t),t)\,dt,
\]
and these three terms pass to the limit under the strict convergence $x_\beta\to x$ that we prove below. We therefore focus on the other terms in $\J_\beta$ and $\J_{RI}$. By~\eqref{eq:equivalenceparametrization} it is sufficient to prove that
\begin{equation}
\label{limsup-thm:ineq-to-prove}
\limsup_{\beta\to\infty} \int_0^T \left[ \psi_{\beta}\left( \dot{x}_{\beta}(t) \right) + \psi_{\beta}^*\left( \nabla E(x_{\beta}(t),t) \right) \right]\,dt
\leq \int_0^S \mathbb{L} (\sx(s),\st(s),\dot{\sx}(s),\dot{\st}(s))\,ds.
\end{equation}

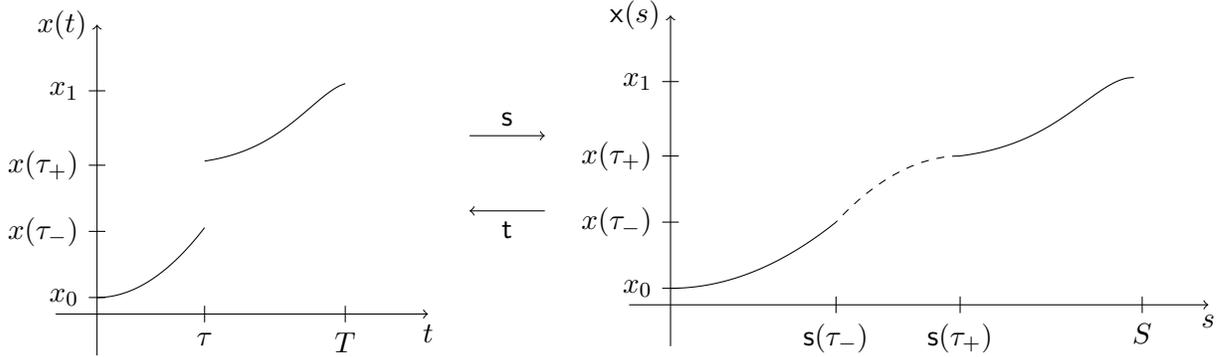
\begin{figure}[b]
\centering
\begin{tabular}{c c c}
\begin{tikzpicture}[scale=1.1]
 \draw[->] (-.5,0) -- (4,0);
 \draw[->] (0,-.5) -- (0,3.5);
 \draw[domain=0:1.3,smooth,variable=\x] plot ({\x},{0.5*\x^2+0.2});
 \draw (-.1,.2) -- (.1,.2);
 \draw (-.1,.2) node[anchor=east] {$x_0$};
 \draw (-.1,1) -- (.1,1);
 \draw (-.1,1) node[anchor=east] {$ x(\tau_-)$};
 \draw (-.1,1.8) -- (.1,1.8);
 \draw (-.1,1.8) node[anchor=east] {$ x(\tau_+)$};
 \draw (-.1,2.7) -- (.1,2.7);
 \draw (-.1,2.7) node[anchor=east] {$ x_1$};
 \draw[domain=1.3:3,smooth,variable=\x] plot ({\x},{5(sin(\x^4)+1.8});
 \draw (1.3,-.1) -- (1.3,.1); 
 \draw (1.3,-.1) node[anchor=north] {$\tau$};
 \draw (0,3.5) node[anchor=east] {$ x(t)$};
 \draw (4,0) node[anchor=north] {$t$};
 \draw (3,.1) -- (3,-.1);
 \draw (3,-.1) node[anchor=north] {$T$};
\end{tikzpicture}
&
\begin{tikzpicture}[scale=0.5]
 \draw (0,-2) {};
 \draw[->] (0,4) -- (2,4);
 \draw (1,4) node[anchor=south] {$\ss$};
  \draw[->] (2,2) -- (0,2);
 \draw (1,1) node[anchor=south] {$\st$};
\end{tikzpicture}
&
\begin{tikzpicture}[scale=1.1]
 \draw[->] (-.5,0) -- (6.5,0);
 \draw[->] (0,-.5) -- (0,3.5);
 \draw[domain=0:2,smooth,variable=\x] plot ({\x},{0.2*\x^2+0.2});
 \draw (-.1,.2) -- (.1,.2);
 \draw (-.1,.2) node[anchor=east] {$x_0$};
 \draw (-.1,1) -- (.1,1);
 \draw (-.1,1) node[anchor=east] {$ x(\tau_-)$};
 \draw (-.1,1.8) -- (.1,1.8);
 \draw (-.1,1.8) node[anchor=east] {$ x(\tau_+)$};
 \draw (-.1,2.7) -- (.1,2.7);
 \draw (-.1,2.7) node[anchor=east] {$x_1$};
 \draw[domain=1.3:3.4,smooth,variable=\x] plot ({\x+2.2},{5(sin(((0.85*\x  + 0.2) )^4) + 1.75});
 \draw[dashed]      (3.5,1.8) parabola  (2,1);
 \draw (2,-.1) -- (2,.1); 
 \draw (2,-.1) node[anchor=north] {$\ss(\tau_-)$};
 \draw (3.5,-.1) -- (3.5,.1);
 \draw (3.5,-.1) node[anchor=north] {$\ss(\tau_+)$};
 \draw (0,3.5) node[anchor=east] {$\sx(s)$};
 \draw (6.5,0) node[anchor=north] {$s$};
 \draw (5.7,.1) -- (5.7,-.1);
 \draw (5.7,-.1) node[anchor=north] {$S$};
\end{tikzpicture}
\\
\end{tabular}
\caption{Schematic representation of the time parametrization procedure. The curve $\sx$ is such that $\sx(\ss(t))=x (t)$.}
\label{fig:timeparametrization}
\end{figure}


From condition \eqref{cond:energy} we have that $\nabla E(x,t)$ is uniformly Lipschitz continuous in $t$; let $L$ be the Lipschitz constant. In order to define a time rescaling we introduce an auxiliary function.
We fix 
\[ 
M := L \int_0^S |\dot \sx(s)|\, ds = L \var(x,[0,T]) = L \var(\sx,[0,S]),
\]
and use Hypothesis C to obtain sequences $\delta_\beta,K_\beta\to0$ for this value of $M$. We now define
\begin{align}
\label{eq:bipotential}
p_{\beta}(v,w)&:=\inf_{\e > 0}\left\{ \e \psi_{\beta}\Bigl(\frac{v}{\e}\Bigr)+\e \psi_{\beta}^*\bigl(|w| \vee (A+\delta_{\beta})\bigr) \right\}\\
&= |v| \left( |w| \vee (A+\delta_{\beta}) \right), \notag
\end{align}
and the infimum is achieved by 
\begin{equation}
\label{def:epsilon}
{\e_{\beta}(v,w)}:= \frac{v}{\partial \psi_{\beta}^* \bigl( |w| \vee (A+\delta_{\beta}) \bigr)} .
\end{equation}
The function $\e_\beta$ can be interpreted as an optimal time rescaling of a given speed $v$ and a given force $|w|\vee (A+\delta_\beta)$.

\medskip
\textit{Definition of the new time $\st_\beta$ and the recovery sequence $x_\beta$.}
For sake of simplicity, in the following we construct a recovery sequence only for a curve $x$ with jumps at $0$ and $T$. Later in the proof we show that, in a similar way, a recovery sequence can be constructed for a curve $x$ with countable jumps with transparent changes in the proof. 

We construct the recovery sequence by first perturbing the time variable $\st$.
We define $\st_{\beta}:[0,S]\to[0,T_{\beta}]$ as the solution of the differential equation
\begin{equation*}
\dot{\st}_{\beta}(s)= \dot{\st}(s) \vee \e_{\beta}\Bigl(\dot{\sx}(s),\nabla E\bigl(\sx(s),\st(s)\bigr)\Bigr),
\qquad \st_\beta(0) = 0.
\end{equation*}
We can assume that $|\dot \sx (s)| \neq 0$ for $s \in [0,\ss(0)]$ and $s \in [\ss(T^-),S]$ to guarantee the positivity of $\e_\beta$. Then, for $s \in [\ss(0),\ss(T^-)]$, we have  
\[
\dot \st (s) = \left. \frac{1}{\dot \ss(t)} \right|_{t=\st(s)} > 0 \, ,
\]
so that $\dot\st_\beta(s)>0$ for all $s \in [0,S]$. The range of $\st_\beta $ is $[0,T_\beta]$, with $T_\beta\geq T$; since the recovery sequence $x_\beta$ is to be defined on the interval $[0,T]$, we rescale $\st_\beta$ by
\[
\lambda_\beta := \frac{T_\beta}T \geq 1,
\]
and define our recovery sequence as follows:
\begin{equation*}
x_{\beta}(t):=\sx \left( \st_{\beta}^{-1} \left( t \lambda_{\beta} \right) \right), \qquad \qquad \text{so that}\qquad 
\dot{x}_{\beta}(t)=\frac{\dot{\sx}}{\dot{\st}_{\beta}}\left(  \st_{\beta}^{-1} \left( t \lambda_{\beta} \right) \right) \lambda_{\beta}. 
\end{equation*}
We now have that
\begin{align*}
\int_0^T \Bigl[ \psi_{\beta}\left( \dot{x}_{\beta}(t) \right) &+ \psi_{\beta}^*\left( \nabla E(x_{\beta}(t),t) \right) \Bigr]\,dt \\
&= \int_0^T \left[ \psi_{\beta}\left( \frac{\dot{\sx}}{\dot{\st}_{\beta}} \left( \st_{\beta}^{-1} \left( t\lambda_{\beta} \right)\right) \lambda_{\beta}  \right) + \psi_{\beta}^*\left( \nabla E\bigl(\sx ( \st_{\beta}^{-1} ( t\lambda_{\beta} )),t\bigr) \right) \right] dt \\
&=\int_0^S\left[ \psi_{\beta} \left( \frac{\dot{\sx}}{\dot{\st}_{\beta}}(s)\lambda_{\beta} \right) + \psi_{\beta}^* \Bigl( \nabla E(\sx(s),\st_{\beta}(s)\lambda_{\beta}^{-1}) \Bigr) \right]\frac{\dot{\st}_{\beta}(s)}{\lambda_{\beta}}\,ds.
\end{align*}

\medskip
\textit{Estimates.}
The inequality~\eqref{limsup-thm:ineq-to-prove} now follows from the following three estimates:
\begin{lemma}
\label{lemma:thm-limsup-estimates}
Write $\e_\beta(s) := \e_{\beta}\Bigl(\dot{\sx}(s),\nabla E\bigl(\sx(s),\st(s)\bigr)\Bigr)$. Then there exists $C_\beta \to 0$ for $\beta \to \infty$ such that
\begin{align}
\label{l-limsup-est:1}
& \int_0^S \biggl[\psi_\beta^*\Bigl(|\nabla E(\sx(s),\st_\beta(s)\lambda_\beta^{-1})|\Bigr) \frac{\dot \st_\beta(s)}{\lambda_\beta} -  \psi_\beta^*\Bigl(|\nabla E(\sx(s),\st(s))|\Bigr)\e_\beta(s)\biggr] \,ds \leq  C_\beta S ;\\
\label{l-limsup-est:2}
& \int_0^S \biggl[ \psi_{\beta} \biggl( \frac{\dot{\sx}}{\dot{\st}_{\beta}}(s) \lambda_{\beta} \biggr) \frac{\dot \st_\beta(s)}{\lambda_\beta} 
- \psi_{\beta} \biggl( \frac{\dot{\sx}}{\e_\beta}(s) \biggr)\e_\beta(s)\biggr] \, ds \leq C_\beta S; \\
& \begin{aligned}
\label{l-limsup-est:3}
\int_0^S  \biggl[ \psi_{\beta} \biggl( \frac{\dot{\sx}}{\e_\beta}(s) \biggr) + \psi_\beta^*\Bigl(|\nabla &E(\sx(s),\st(s))|\Bigr)\biggr] \e_\beta(s) \, ds  \\
&\leq \int_0^S |\dot\sx(s)| \bigl(A\vee|\nabla E(\sx(s),\st(s))|\bigr)\, ds + C_\beta S.
\end{aligned}
\end{align}

\end{lemma}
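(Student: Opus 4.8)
The plan is to prove the three estimates by comparing, term by term, the integrand produced by the change of variables just above the Lemma, namely $\bigl[\psi_\beta(\tfrac{\dot\sx}{\dot\st_\beta}\lambda_\beta)+\psi_\beta^*(\nabla E(\sx,\st_\beta\lambda_\beta^{-1}))\bigr]\tfrac{\dot\st_\beta}{\lambda_\beta}$, with the ``unperturbed'' integrand $\bigl[\psi_\beta(\dot\sx/\e_\beta)+\psi_\beta^*(|\nabla E(\sx,\st)|)\bigr]\e_\beta$: \eqref{l-limsup-est:1} controls the change in the force term, \eqref{l-limsup-est:2} the change in the velocity term, and \eqref{l-limsup-est:3} bounds the unperturbed integral by $\mathbb L$ up to a term $C_\beta S$. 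Before that I would record some a priori facts. We may assume $\J_{RI}(x)<\infty$, since otherwise \eqref{limsup-thm:ineq-to-prove} is trivial; then $|\nabla E(x(t),t)|\le A$ for a.e.\ $t$, hence $|\nabla E(\sx(s),\st(s))|\le A$ for a.e.\ $s$ in the ``smooth part'' $\{\dot\st>0\}$, where $\sx=x\circ\st$. Monotonicity of $\partial\psi_\beta^*$ together with $\partial\psi_\beta^*(A+\delta_\beta)=K_\beta^{-1}$ gives, via~\eqref{def:epsilon}, the bound $\e_\beta(s)\le K_\beta|\dot\sx(s)|$; hence $T_\beta-T=\int_0^S(\e_\beta-\dot\st)_+\,ds\le K_\beta\var(x,[0,T])=:\rho_\beta\to0$, so $\lambda_\beta\to1$, and an elementary estimate gives $|\st_\beta(s)\lambda_\beta^{-1}-\st(s)|\le\rho_\beta$ for all $s$. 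Combined with the Lipschitz-in-$t$ bound in~\eqref{cond:energy} (constant $L$) and the choice $M=L\var(x,[0,T])$, this yields $|\nabla E(\sx(s),\st_\beta(s)\lambda_\beta^{-1})|\le|\nabla E(\sx(s),\st(s))|+MK_\beta$.

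Estimate~\eqref{l-limsup-est:3} is essentially the definition~\eqref{eq:bipotential}--\eqref{def:epsilon} of the bipotential: by optimality of $\e_\beta$,
\[
\e_\beta(s)\Bigl[\psi_\beta\bigl(\tfrac{\dot\sx(s)}{\e_\beta(s)}\bigr)+\psi_\beta^*\bigl(|\nabla E(\sx(s),\st(s))|\vee(A+\delta_\beta)\bigr)\Bigr]=|\dot\sx(s)|\,\bigl(|\nabla E(\sx(s),\st(s))|\vee(A+\delta_\beta)\bigr).
\]
Since $\psi_\beta^*$ is nondecreasing on $[0,\infty)$ the bracket dominates the integrand of~\eqref{l-limsup-est:3}, and $|\nabla E|\vee(A+\delta_\beta)\le(|\nabla E|\vee A)+\delta_\beta$; integrating and using $\int_0^S|\dot\sx|=\var(x,[0,T])$ gives the claim with $C_\beta S=\delta_\beta\var(x,[0,T])\to0$. (On $\{\dot\st>0\}$ the quantity $|\dot\sx|(|\nabla E(\sx,\st)|\vee A)$ equals $\mathbb L$ because $|\nabla E|\le A$ there, and on $\{\dot\st=0\}$ it is $\mathbb L$ by definition, so the right-hand side of~\eqref{l-limsup-est:3} is exactly $\int_0^S\mathbb L\,ds+C_\beta S$.)

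For~\eqref{l-limsup-est:1} I would split $[0,S]$ into $\{\dot\st=0\}$ and $\{\dot\st>0\}$. On $\{\dot\st=0\}$ one has $\dot\st_\beta=\e_\beta$, and writing $w:=|\nabla E(\sx,\st)|\le R$ the integrand is at most $\bigl[\psi_\beta^*(w+MK_\beta)-\psi_\beta^*(w)\bigr]\e_\beta$ (using $\lambda_\beta\ge1$, $\psi_\beta^*\ge0$ and monotonicity), which by convexity of $\psi_\beta^*$ and~\eqref{def:epsilon} is at most $M|\dot\sx|\,\dfrac{\partial\psi_\beta^*(w+MK_\beta)}{\partial\psi_\beta^*(w\vee(A+\delta_\beta))}\,K_\beta$; this tends to $0$ uniformly in $s$ by the second bullet of Hypothesis~C, so the contribution integrates to $\le C_\beta\var(x,[0,T])$. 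On $\{\dot\st>0\}$ we have $w\le A$, hence $\psi_\beta^*(|\nabla E(\sx,\st_\beta\lambda_\beta^{-1})|)\le\psi_\beta^*(A+MK_\beta)\le\psi_\beta^*(A)+MK_\beta\,\partial\psi_\beta^*(A+MK_\beta)$, which tends to $0$ by Hypotheses~B and~C; since moreover $\dot\st_\beta/\lambda_\beta\le\dot\st\vee\e_\beta\le1$ for large $\beta$ and the subtracted term is nonnegative, the integrand there is at most $\psi_\beta^*(A+MK_\beta)$, which integrates to $o(1)$.

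Estimate~\eqref{l-limsup-est:2} rests on the identity $\psi_\beta(\tfrac{\dot\sx}{\dot\st_\beta}\lambda_\beta)\tfrac{\dot\st_\beta}{\lambda_\beta}=\tilde\e_\beta\,\psi_\beta(\dot\sx/\tilde\e_\beta)$ with $\tilde\e_\beta:=\dot\st_\beta/\lambda_\beta$: both sides of~\eqref{l-limsup-est:2} are then of the form $\e\mapsto\e\,\psi_\beta(\dot\sx/\e)$, which is nonincreasing because $\psi_\beta$ is convex with $\psi_\beta(0)=0$; hence on $\{\tilde\e_\beta\ge\e_\beta\}$ the integrand is $\le0$. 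On $\{\tilde\e_\beta<\e_\beta\}$ put $\mu:=\e_\beta/\tilde\e_\beta$; inspecting the definition of $\dot\st_\beta$ shows $1<\mu\le\lambda_\beta$ ($\mu=\lambda_\beta$ when $\dot\st_\beta=\e_\beta$, and $\mu=\lambda_\beta\e_\beta/\dot\st\le\lambda_\beta$ when $\dot\st_\beta=\dot\st\ge\e_\beta$). With $u:=\dot\sx/\e_\beta$, convexity of $\psi_\beta$ gives $\tfrac1\mu\psi_\beta(\mu u)-\psi_\beta(u)\le(\mu-1)|u|\,\partial\psi_\beta(\mu|u|)$, so the integrand is at most $\e_\beta(\mu-1)|u|\,\partial\psi_\beta(\mu|u|)=(\mu-1)|\dot\sx|\,\partial\psi_\beta(\mu|u|)$. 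Since $\partial\psi_\beta$ inverts $\partial\psi_\beta^*$ we have $|u|=\partial\psi_\beta^*(|\nabla E(\sx,\st)|\vee(A+\delta_\beta))$, and Hypothesis~D with multiplier $\mu\in[1,\lambda_\beta]$ bounds $\partial\psi_\beta(\mu|u|)$ by $|\nabla E(\sx,\st)|\vee(A+\delta_\beta)$ plus the uniform constant of Hypothesis~D, i.e.\ by a $\beta$-independent constant $C$; thus the integrand is $\le(\lambda_\beta-1)C|\dot\sx|$, which integrates to $(\lambda_\beta-1)C\var(x,[0,T])\to0$. Summing the three inequalities, together with the chain of identities preceding the Lemma and the remark on $\mathbb L$ above, then yields~\eqref{limsup-thm:ineq-to-prove}. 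I expect the main obstacle to be estimate~\eqref{l-limsup-est:1}: the vanishing $O(K_\beta)$ perturbation of the time variable turns into an $O(K_\beta)$ perturbation of $\nabla E$, which is fed into the increasingly steep $\psi_\beta^*$, and only the precise ratio bound of Hypothesis~C keeps the product bounded; estimate~\eqref{l-limsup-est:2} is milder but still needs the nonobvious pairing of the perspective-function rewriting with Hypothesis~D. Once the two-endpoint-jump case is settled, the extension to countably many jumps is routine, as announced in the proof of Theorem~\ref{thm:RIgammaconv}.
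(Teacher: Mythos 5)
Your proposal is correct and follows essentially the same strategy as the paper's proof: establish $0\le \st_\beta-\st\le MK_\beta/L$ and the Lipschitz bound on $\nabla E$, prove \eqref{l-limsup-est:3} directly from the bipotential identity \eqref{eq:bipotential}--\eqref{def:epsilon} and monotonicity of $\psi_\beta^*$, prove \eqref{l-limsup-est:1} by splitting the domain and invoking convexity of $\psi_\beta^*$ together with the second and third bullets of Hypothesis~C, and prove \eqref{l-limsup-est:2} via the non-increase of $\e\mapsto\e\,\psi_\beta(v/\e)$ combined with a convexity estimate on $\psi_\beta$ controlled through Hypothesis~D. The only cosmetic departures are your choice of the $\beta$-independent split $\{\dot\st=0\}$ versus $\{\dot\st>0\}$ (where the paper splits according to $\dot\st_\beta=\dot\st$ or not, which amounts to the same cases up to a null set) and the fact that you inline the paper's auxiliary inequality $(*)$ rather than isolating it as a separate claim; neither changes the substance.
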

We prove this Lemma below.

\medskip
\textit{Convergence and conclusion.} Strict convergence of $x_{\beta} \to x$ follows if we prove the pointwise convergence $x_{\beta}(t) \to x(t)$ for all $t \in [0,T]$ and the convergence of the variation. \\
Recall the definition of $\dot{\st}_{\beta}(s)= \dot{\st}(s) \vee \e_{\beta}\Bigl(\dot{\sx}(s),\nabla E\bigl(\sx(s),\st(s)\bigr)\Bigr)$, considering that 
\[
\lim_{\beta \to \infty}\sup_{s}\e_{\beta}(s) = \lim_{\beta \to \infty} \sup_s \frac{  |\dot{\sx}(s)|}{\partial \psi^*_\beta (|\nabla E(\sx(s),\st)| \vee (A + \delta_\beta)} \leq \lim_{\beta \to \infty} \frac{ \sup_{s} |\dot{\sx}(s)|}{\partial \psi^*_\beta (A + \delta_\beta)} = 0,
\]
it implies $\dot{\st}_{\beta}(s) \to \dot{\st}(s)$, and so it also holds
\[
\st_{\beta}(s) \to \st(s) \qquad \implies \qquad
\st_{\beta}^{-1}(t \lambda_{\beta}) \to \ss(t) \qquad \forall t \in (0,T).
\]
Moreover, $\dot \st_\beta(s) > 0$ implies that $\st_\beta^{-1} (0) = 0$ and $\st_\beta^{-1} (T_\beta) = S$, and so we have that
\[
x_{\beta}(t)= \sx \left( \st_{\beta}^{-1} \left( t \lambda_{\beta} \right)  \right) \to \sx(\ss(t))\stackrel{\eqref{eq:inverserelation}}= x(t) \qquad \forall \, t \in [0,T].
\]
The convergence of the variation is automatic, since by definition of $x_{\beta}$
\begin{equation*}
\int_0^T |\dot{x}_{\beta}(t)|\,dt = \int_0^S |\dot{\sx}(s)| ds = \var(\sx,[0,S]) \stackrel{\eqref{eq:equivalenceVar}}{=} \var(x,[0,T]).
\end{equation*} 

\textit{Recovery sequence for a general curve $x$.}
Now we show how to construct a recovery sequence for a curve with countable jumps. Given the jump set $J_x$, consider a countable set $\{ t^i \} \supset J_x$ (with $t^i < t^{i+1}$) such that the interval $[0,T]$ can be written as the union of disjoint subintervals
\[
[0,T]=\bigcup_i \Sigma^i  \qquad \text{ where } \Sigma^i = [t^i,t^{i+1}].
\]
Then, let $t_\beta^i=\st_\beta(\ss(t^i))$, we define
\[
\lambda_{\beta}^i=\frac{t^{i+1}_\beta - t^i_\beta}{t^{i+1}-t^i},
\]
and the recovery sequence is
\begin{equation}
\label{eq:def_gen_rec_seq}
x_{\beta}(t):=\sx \left( \st_{\beta}^{-1} \left( \lambda^i_\beta(t - t^i) + t^i_\beta \right) \right) \qquad \text{ for } t \in \Sigma^i,
\end{equation} 
so that
\begin{equation*}
\dot{x}_{\beta}(t)=\frac{\dot{\sx}}{\dot{\st}_{\beta}}\left(  \st_{\beta}^{-1} \left(  \lambda^i_\beta(t - t^i) + t^i_\beta \right) \right) \lambda^i_{\beta} \qquad \text{ for } t \in (\Sigma^i)^\circ.
\end{equation*}
We have now that
\begin{align*}
\int_0^T \Bigl[ &\psi_{\beta}\left( \dot{x}_{\beta}(t) \right) + \psi_{\beta}^*\left( \nabla E(x_{\beta}(t),t) \right) \Bigr]\,dt \\
&= \sum_i \int_{\Sigma^i} \left[ \psi_{\beta}\left( \frac{\dot{\sx}}{\dot{\st}_{\beta}} \left( \st_{\beta}^{-1} \left( \lambda^i_\beta(t - t^i) + t^i_\beta \right)\right) \lambda^i_{\beta}  \right) + \psi_{\beta}^*\left( \nabla E\bigl(\sx ( \st_{\beta}^{-1} (  \lambda^i_\beta(t - t^i) + t^i_\beta )),t\bigr) \right) \right] dt \\
&=\sum_i \int_{\ss(t^i)}^{\ss(t^{i+1})}\left[ \psi_{\beta} \left( \frac{\dot{\sx}}{\dot{\st}_{\beta}}(s)\lambda^i_{\beta} \right) + \psi_{\beta}^* \Bigl( \nabla E(\sx(s),(\lambda^i_{\beta})^{-1}(\st_{\beta}(s)-t_{\beta}^i)+ t^i) \Bigr) \right]\frac{\dot{\st}_{\beta}(s)}{\lambda^i_{\beta}}\,ds.
\end{align*}
Applying Lemma~\ref{lemma:thm-limsup-estimates} in every subinterval  $[\ss(t^i),\ss(t^{i+1})]$, we obtain the same bounds (\ref{l-limsup-est:1}--\ref{l-limsup-est:3}) with $C_\beta S$ substituted by $C_\beta |\ss(t^{i+1}) - \ss(t^i)|$. Then inequality~\eqref{limsup-thm:ineq-to-prove} follows because
\[
\sum_i C_\beta  |\ss(t^{i+1}) - \ss(t^i)| = C_\beta S.
\]
The pointwise convergence of $x_\beta(t) \to x(t)$ for $t \in (\Sigma^i)^\circ$ is again trivial. The following calculations show that, by construction, the convergence holds also in the points $\{ t^i \} \supset J_x$
\[
x_{\beta}(t^i)\stackrel{\eqref{eq:def_gen_rec_seq}}{=} \sx \left( \st_{\beta}^{-1} \left( t^i_\beta \right) \right) = \sx \left( \st_{\beta}^{-1} \left( \st_\beta(\ss(t^i)) \right) \right) = \sx \left( \ss(t^i) \right) \stackrel{\eqref{eq:inverserelation}}{=} x(t^i) ,
\]
and we conclude.
\end{proof}

\begin{proof}[Proof of Lemma~\ref{lemma:thm-limsup-estimates}]
First note that for any $s'\in[0,S]$,
\[
0\leq\st_\beta(s')-\st(s') = \int_0^{s'} (\dot\st_\beta(s)-\dot \st(s))\, ds
\leq \int_0^S \e_\beta(s)\, ds 
\leq \int_0^S \frac{|\dot \sx(s)|}{\partial\psi_\beta^*(A+\delta_\beta)}\, ds
= \var(\sx,[0,T])K_\beta  = \frac{MK_{\beta}}{L}.
\]
Consequently
\begin{equation}
\label{ineq:lambda-to-zero}
0\leq T_\beta -T = T(\lambda_\beta-1) \leq \frac {MK_\beta}{L} \to 0 \qquad\text{as }\beta\to\infty.
\end{equation} 
Using the Lipschitz continuity of $\nabla E$ in time, we also have 
\begin{align}
\notag
|\nabla E(\sx(s),\st_{\beta}(s)\lambda_{\beta}^{-1})| - |\nabla E(\sx(s),\st(s))| 
&\leq L\big( \st_\beta(s)\lambda_\beta^{-1}-\st(s)\bigr)\\
&\leq L\big( \st_\beta(s)-\st(s)\bigr)  \leq MK_\beta.
\label{ineq:nablaE-Lipschitz}
\end{align}

We now  prove~\eqref{l-limsup-est:1}. We first estimate
\begin{align}
\notag
\int_0^S &\psi_\beta^*\Bigl(|\nabla E(\sx(s),\st_\beta(s)\lambda_\beta^{-1})|\Bigr) \frac{\dot \st_\beta(s)}{\lambda_\beta}\, ds
\stackrel{\eqref{ineq:nablaE-Lipschitz}}\leq \int_0^S \psi_\beta^*\Bigl(|\nabla E(\sx(s),\st(s))| + MK_\beta\Bigr) \frac{\dot \st_\beta(s)}{\lambda_\beta}\, ds\\
\label{ineq:lemma-limsup-estimates-psi*}
&\stackrel{\text{$\psi_\beta^*$ convex}}\leq \int_0^S 
 \left[\psi_\beta^*\Bigl(|\nabla E(\sx(s),\st(s))|\Bigr)
 + MK_\beta \partial\psi_\beta^*\Bigl(|\nabla E(\sx(s),\st(s))| + MK_\beta\Bigr)\right] \frac{\dot \st_\beta(s)}{\lambda_\beta}\, ds.
\end{align}
Setting $\Sigma_\beta := \{s\in[0,S]: \dot\st_\beta(s) = \dot\st(s)\}$, we split the domain into $\Sigma_\beta$ and $\Sigma_\beta^c$. On $\Sigma_\beta$, since $\dot\st(s) >0$, the finiteness of $\mathbb L$ implies that $|\nabla E|\leq A$; on $\Sigma_\beta^c$, $\dot \st_{\beta}(s) = \e_\beta(s)$. Therefore
\begin{multline}
\label{est:l-limsup-32a}
\int_0^S \psi_\beta^*\Bigl(|\nabla E(\sx(s),\st(s))|\Bigr) \bigl[\dot \st_\beta(s) - \e_\beta(s)\bigr]\, ds \\
=\int_{\Sigma_\beta} \psi_\beta^*\Bigl(|\nabla E(\sx(s),\st(s))|\Bigr) \bigl[\dot \st_\beta(s) - \e_\beta(s)\bigr]\, ds \\
\leq  \psi_\beta^*(A)\int_{\Sigma_\beta}  \dot \st_\beta(s)\, ds \leq C_\beta S,
\end{multline}
with $C_\beta:=\psi^*_\beta (A)$.

We also split the second term in~\eqref{ineq:lemma-limsup-estimates-psi*} into integrals on $\Sigma_\beta$ and $\Sigma_\beta^c$. On $\Sigma_\beta$, again since $|\nabla E| \leq A$
\begin{equation}
\label{est:l-limsup-32b}
\int_{\Sigma_\beta} MK_\beta \partial\psi_\beta^*\Bigl(|\nabla E(\sx(s),\st(s))| + MK_\beta\Bigr)\dot\st_\beta(s)\, ds \leq MK_\beta \psi_\beta^* (A+MK_\beta) \int_{\Sigma_\beta} \dot \st_\beta(s)\, ds \leq C_\beta S,
\end{equation}
with $C_\beta:=  MK_\beta \psi_\beta^* (A+MK_\beta)$.

On the other hand, on $\Sigma_\beta^c$, using $\dot \st_\beta(s) = \e_\beta(s) = |\dot \sx(s)| / \partial\psi_\beta^*\bigl(|\nabla E(\sx(s),\st(s))| \vee (A+\delta_\beta)\bigr)$
\begin{multline}
\int_{\Sigma_\beta^c} MK_\beta \partial\psi_\beta^*\Bigl(|\nabla E(\sx(s),\st(s))| + MK_\beta\Bigr)\dot\st_\beta(s)\, ds \leq \\
M\int_{\Sigma_\beta^c} K_\beta \frac{\partial\psi_\beta^*\Bigl(|\nabla E(\sx(s),\st(s))| + MK_\beta\Bigr)}{\partial\psi_\beta^*\Bigl(|\nabla E(\sx(s),\st(s))| \vee (A+\delta_\beta)\Bigr)}\, |\dot\sx(s)|\, ds \leq C_\beta S,
\label{est:l-limsup-32c}
\end{multline}
where the last inequality holds because $\var(\sx,[0,S]) \leq S$, with
\[
C_\beta:= \sup_s M K_\beta \frac{\partial\psi_\beta^*\Bigl(|\nabla E(\sx(s),\st(s))| + MK_\beta\Bigr)}{\partial\psi_\beta^*\Bigl(|\nabla E(\sx(s),\st(s))| \vee (A+\delta_\beta)\Bigr)}.
\]
Together, (\ref{est:l-limsup-32a}--\ref{est:l-limsup-32c}) prove~\eqref{l-limsup-est:1} with vanishing $C_\beta$ thanks to Hypothesis B-C.

\medskip

To prove~\eqref{l-limsup-est:2}, we use the fact that $\dot \st_\beta(s)\geq \e_\beta(s)$ for all $s$, and that the mapping $\tau \mapsto \tau \psi_\beta(v/\tau)$ is non-increasing. Therefore
\begin{align*}
\int_0^S \psi_{\beta} \biggl( \frac{\dot{\sx}}{\dot{\st}_{\beta}}(s)\lambda_{\beta} \biggr) \frac{\dot \st_\beta(s)}{\lambda_\beta} 
 \, ds 
&\leq \int_0^S \psi_{\beta} \biggl( \frac{\dot{\sx}}{\e_{\beta}}(s)\lambda_{\beta} \biggr) \frac{\e_\beta(s)}{\lambda_\beta} 
 \, ds \\
&\stackrel{(*)}\leq \int_0^S \biggl[\psi_{\beta} \biggl( \frac{\dot{\sx}}{\e_{\beta}}(s)\biggr) 
  + C \partial\psi_\beta^*\Bigl(|\nabla E(\sx(s),\st(s))|\vee (A+\delta_\beta) \Bigr)(\lambda_\beta-1)\biggr]{\e_\beta(s)} \, ds.
\end{align*}
where $C$ is a constant and we prove the inequality marked $(*)$ below.  Continuing with the argument, we again apply the definition~\eqref{def:epsilon} of $\e_\beta$ to find
\[
(\lambda_\beta-1) \int_0^S \partial\psi_\beta^*\Bigl(|\nabla E(\sx(s),\st(s))|\vee (A+\delta_\beta) \Bigr)\,{\e_\beta(s)} \, ds
\leq (\lambda_\beta-1) \int_0^S |\dot \sx(s)|\, ds \leq C_\beta S,
\]
where $C_\beta := (\lambda_\beta - 1)$ converges to zero by~\eqref{ineq:lambda-to-zero}. 
\medskip

We next prove~\eqref{l-limsup-est:3}, with $C_\beta:= \delta_\beta$. We calculate
\begin{align*}
\int_0^S  \biggl[ \psi_{\beta} \biggl( \frac{\dot{\sx}}{\e_\beta}(s) \biggr)
  &+ \psi_\beta^*\Bigl(|\nabla E(\sx(s),\st(s))|\Bigr)\biggr] \e_\beta(s) \, ds\\
&\leq \int_0^S  \biggl[ \psi_{\beta} \biggl( \frac{\dot{\sx}}{\e_\beta}(s) \biggr)
  + \psi_\beta^*\Bigl(|\nabla E(\sx(s),\st(s))| \vee(A+\delta_\beta)\Bigr)\biggr] \e_\beta(s) \, ds\\
&= \int_0^S |\dot\sx(s)| \Bigl(|\nabla E(\sx(s),\st(s))|\vee (A+\delta_\beta)\Bigr)\, ds
\qquad \text{by \eqref{eq:bipotential}}\\
&\leq  \int_0^S |\dot\sx(s)| \Bigl(|\nabla E(\sx(s),\st(s))|\vee A\Bigr)\, ds + C_\beta S.
\end{align*}

\medskip

We finally prove the inequality $(*)$ above, as the following separate result: for each $R>0$ there exists $C>0$ such that 
\[
\forall\, \alpha\geq 1,\; \forall \, |z|\leq R, \;\forall \,\beta:\qquad
\psi_\beta(\alpha\partial\psi_\beta^*(z)) \leq \psi_\beta(\partial\psi_\beta^*(z))
+ C(\alpha-1) \partial\psi_\beta^*(z).
\]
To show this, note that by Hypothesis D for each $\alpha\geq1$ and for each $|z|\leq R$ there exists $\eta_\beta(z,\alpha)$ such that 
\[
\partial\psi_\beta^*(z+\eta_\beta(z,\alpha)) = \alpha \partial\psi_\beta^*(z),
\]
and $\eta_\beta$ is bounded uniformly in $\alpha$, $\beta$, and $|z|\leq R$.
The following three statements follow from convexity and convex duality:
\begin{align*}
&\psi_\beta(\alpha\partial\psi_\beta^*(z)) + \psi_\beta^*(z+\eta_\beta) - (z+\eta_\beta)\alpha \partial\psi_\beta^*(z) = 0;\\
&\psi_\beta(\partial\psi_\beta^*(z)) + \psi_\beta^*(z) - z \partial\psi_\beta^*(z) = 0;\\
&\psi_\beta^*(z+\eta_\beta) - \psi_\beta^*(z) - \eta\partial\psi_\beta^*(z)\geq 0.
\end{align*}
Upon subtracting the second and third line from the first we find
\[
\psi_\beta(\alpha\partial\psi_\beta^*(z)) \leq \psi_\beta(\partial\psi_\beta^*(z))
+ (z+\eta_\beta)(\alpha-1) \partial\psi_\beta^*(z),
\]
which implies the result.
\end{proof}

\section{Discussion}
\label{sec:Discussion}


In the introduction we posed the question whether we could understand the distinction and the relationship between gradient-flow and rate-independent systems from the point of view of stochastic processes. The simple one-dimensional model of this paper gives a very clear answer, that we summarize in our words as follows:
\begin{itemize}
\item The continuum limit is a generalized gradient flow, with non-quadratic, non-$1$-homogeneous dissipation, and the large-deviations rate functional `is' the corresponding generalized gradient-flow structure, in the sense of Section~\ref{subsec:ldpgf-intro}; 
\item Taking further limits recovers both quadratic gradient-flow and rate-independent cases;
\item At least some of the limits are robust against exchanging the order of the limits, and we conjecture that this robustness goes much further. 
\end{itemize}
Therefore the quadratic and rate-independent cases are naturally embedded in the scale of systems characterized by $\alpha$ and $\beta$.

%
%
%

In addition, the details of the proofs show how the formulation in terms of $\J$ of (a) large deviations, (b) generalized gradient flows including rate-independent systems, and (c) convergence results for these systems, gives a unified view on the field and a coherent set of tools for the analysis and manipulation of the systems. 

\medskip

Related issues have been investigated in the case of stochastic differential equations.
The two limiting processes, $n\to\infty$ and $\beta\to\{0,\infty\}$ can be interpreted as differently scaled combinations of two limiting processes: (a) the small-noise limit, (b) the limit of vanishing microstructure. In the case of SDEs~\cite{Baldi91,FreidlinSowers99,DupuisSpiliopoulos12}, three regimes have been identified, corresponding to `microstructure smaller than noise', `noise smaller than microstructure', and the critical case. In the first of these, `microstructure smaller than noise', a behaviour arises that resembles the quadratic limit of this paper, in which the microstructure is effectively swamped by the noise. The critical case resembles our original large-deviation result (Theorem~\ref{thm:LDF}) in that both give non-quadratic, non-one-homogeneous rate functionals. Finally, when the noise is asymptotically smaller than the microstructure, a limit similar to the rate-independent limit is obtained in~\cite{FreidlinSowers99,DupuisSpiliopoulos12}, but because the authors  consider time-invariant energies and a different scaling, the behaviour of the limiting system is rather different.

\medskip

The one-dimensionality of the current setup may appear to be a significant restriction, but we believe (and in some cases we know) that the structure can be generalized to a wide class of other systems. For instance,
\begin{itemize}
\item The initial large-deviations result (Theorem~\ref{thm:LDF}) also holds in higher dimensions; other proofs of this and similar results are given in~\cite{ShwartzWeiss95,Chen96}.
\item The joint large-deviations-quadratic limit (Theorem~\ref{thm:lardevres}) generalizes to higher dimensions with only notational changes in the proof.
\item Of the proof of the convergence to a rate-independent system (Theorem~\ref{thm:RIgammaconv}), one part (the liminf-inequality) has been done in the generality of a metric space, with a specific functional form of the dissipation potential, in~\cite{MRS12}. The other part, the construction of a recovery sequence, is subject of current work; here the characterization of the limiting jump term depends on the particular form of the approximating $\psi_{\beta}$-$\psi_{\beta}^*$, in a way that is not yet clear. 
\end{itemize}
More generally, the results of~\cite{MielkePeletierRenger13TR} show that the connection between large-deviation principles and generalized gradient flows is robust, and arises for all reversible stochastic processes and quite a few more (such as the GENERIC system in~\cite{DuongPeletierZimmer13}). 

\medskip

In Figure~\ref{fig:ideaofpaper} the question mark represents an open problem: the combined large-deviation-rate-independent limit. We conjecture that, as in the combined large-deviation-quadratic limit (Theorem~\ref{thm:lardevres}), a large-devation principle holds in this limit, with rate functional $\J_{RI}$.  Unfortunately the framework provided by \cite{FK06} does not seem to apply as-is, and the form of this functional will require a radical change in the strategy of the proof. 


\section*{Acknowledgements}
We want to sincerely thank Jin Feng for the valuable comments and suggestions. During all the preparation of this work, Giuseppe Savar\'e provided many useful suggestions and critical remarks, and we are very grateful for his contribution. GAB and MAP kindly acknowledge support from the Nederlandse Organisatie voor Wetenschappelijk Onderzoek (NWO) VICI grant 639.033.008.


\bibliography{BonaschiBib}
\bibliographystyle{alpha}

\end{document}